\documentclass[runningheads]{llncs}
\usepackage[T1]{fontenc}
\usepackage{graphicx}
\usepackage{cite}
\usepackage{amsmath,amssymb,amsfonts}
\usepackage{booktabs}
\usepackage{tabularx}
\usepackage[colorlinks=true,
            linkcolor={red!70!black},
            citecolor={green!70!black},
            urlcolor={blue!80!black}]{hyperref}
\usepackage[all]{hypcap}
\usepackage{tikz}
\usetikzlibrary{arrows.meta,positioning,fit,shapes.misc,calc,backgrounds}
\spnewtheorem{assumption}{Assumption}{\bfseries}{\itshape}

\usepackage{pgfplots}
\pgfplotsset{compat=1.18}
\usepackage{xcolor}

\definecolor{sciBlue}{HTML}{0077BB}
\definecolor{sciOrange}{HTML}{EE7733}
\definecolor{sciTeal}{HTML}{009988}
\definecolor{sciRed}{HTML}{CC3311}
\definecolor{sciCyan}{HTML}{33BBEE}
\definecolor{sciMagenta}{HTML}{EE3377}
\definecolor{sciPurple}{HTML}{AA4499}
\definecolor{sciGray}{HTML}{555555}

\pgfplotsset{
    every axis/.append style={
        tick label style={font=\small},
        label style={font=\small},
        legend style={font=\small, fill=white, fill opacity=0.9, draw opacity=1, text opacity=1},
        grid style={densely dotted, color=gray!90},
        thick,
        enlargelimits=0.05,
        xtick={1, 3, 5, 7, 9, 11, 13, 15, 17, 19},
    }
}
\title{Obscura: Privacy-Preserving Protocol for the Algorand Blockchain Using LSAG Ring Signatures}
\titlerunning{Obscura: Privacy-Preserving Protocol for Algorand}
\author{Navid Azimi}
\authorrunning{N. Azimi}
\institute{Emory University, Atlanta, GA 30322, USA  \\ 
\email{navid.azimi@emory.edu}
}
\begin{document}
\maketitle
\begin{abstract}
While public blockchains provide transparent and auditable transaction histories, they inherently compromise user privacy. Existing privacy-enhancing protocols, such as those deployed on Ethereum, typically rely on succinct zero-knowledge proofs (zk-SNARKs) to obscure the transaction graph. However, implementing comparable cryptographic guarantees on high-throughput blockchains like Algorand is challenging due to strict per-call execution budgets and the state contention introduced by global Merkle accumulators. This paper presents \emph{Obscura}, a decentralized, non-custodial privacy protocol tailored for constrained smart contract environments. Obscura achieves transaction anonymity using Linkable Spontaneous Anonymous Group (LSAG) signatures over the BN254 elliptic curve, verified entirely on-chain. To overcome limitations of the Algorand Virtual Machine (AVM), we introduce a novel state model that leverages Algorand's Box Storage for $O(1)$ commitment membership checks, eliminating the need for global Merkle accumulators, and a dynamic opcode-budget expansion mechanism via pooled inner application calls. Our implementation demonstrates that signer-ambiguous privacy is practical and efficient on Algorand without relying on trusted setups or succinct proofs. Obscura provides a robust privacy layer for transparent ledgers, bridging the gap between high-throughput blockchain architectures and the dual requirements of cryptographic privacy and selective auditability.
\keywords{Algorand \and Linkable Ring Signatures \and LSAG \and Blockchain Privacy \and Anonymous Transactions \and On-chain Verification \and Smart Contracts}
\end{abstract}
%
\section{Introduction}
The transparent nature of public blockchains \cite{nakamoto2008bitcoin} is fundamental to their security and audibility, yet it inherently undermines user privacy. This friction is especially pronounced on high-throughput blockchains such as Algorand \cite{gilad2017algorand}, which are increasingly used for enterprise financial services and decentralized finance (DeFi). In these settings, participants require confidentiality to protect sensitive transaction data, while still maintaining the ability to meet regulatory and audit requirements. Because all transactions are recorded on a public ledger, adversaries can trace the flow of funds, cluster addresses, and deanonymize users through transaction graph analysis \cite{ron2013quantitative,meiklejohn2013fistful,androulaki2013evaluating}. To address this, privacy-preserving protocols and decentralized mixers have been developed \cite{miers2013zerocoin,sasson2014zerocash,bonneau2014mixcoin}. On Ethereum \cite{wood2014ethereum}, protocols such as Tornado Cash \cite{pertsev2019tornado} successfully deployed privacy pools that sever the on-chain link between depositors and withdrawers. These systems typically rely on zk-SNARKs to prove membership in a Merkle tree \cite{merkle1987digital} of deposits without revealing which specific fund is being spent. Related smart-contract designs include M\"obius \cite{meiklejohn2018mobius}, which employs linkable ring signatures for trustless tumbling on Ethereum, and Zether \cite{bunz2020zether}, which provides confidential account balances.

However, the deployment of SNARK-based privacy pools relies heavily on complex cryptographic circuits and state-heavy global accumulators. Emerging high-throughput blockchains, such as Algorand, optimize for fast finality and predictable fees by imposing strict per-transaction execution budgets (opcode limits) and flat state models. While the Algorand Virtual Machine (AVM) supports native pairing operations, evaluating the multiple pairings required for traditional zk-SNARK verification (e.g., Groth16 \cite{groth2016size}) consumes a massive portion of the pooled execution budget. Furthermore, SNARK-based mixers require maintaining a global Merkle tree, which introduces severe state-contention bottlenecks on high-throughput ledgers and relies on trusted setups. This architectural divergence creates a significant gap in the ecosystem: achieving non-custodial transaction anonymity on highly constrained smart contract platforms without relying on trusted setups or global state accumulators.

\paragraph{Proposed Approach.}
To bridge this gap, we present \emph{Obscura}, a privacy-preserving protocol that achieves transaction anonymity on Algorand without relying on succinct proofs or trusted setups. The core idea of Obscura is that Linkable Spontaneous Anonymous Group (LSAG) signatures \cite{liu2004linkable}, combined with scalable key-value storage and dynamic execution budget allocation, can provide a practical privacy layer on smart contract platforms that lack native pairing operations.

In Obscura, a user deposits a fixed denomination by publishing a cryptographic commitment, a public key derived from a secret scalar on the BN254 elliptic curve. To withdraw the funds to a new address, the user generates an LSAG signature over a ring of on-chain commitments (including their own and several decoys). This signature proves knowledge of the secret corresponding to one of the commitments without revealing which one, thereby providing \emph{signer ambiguity}. To prevent double-spending, the signature includes a \emph{key image} \cite{van2013cryptonote} (or nullifier), which is deterministically derived from the user's secret. The smart contract verifies the LSAG signature natively using AVM's elliptic-curve opcodes and ensures the key image has not been previously recorded.

\paragraph{Key Challenges and Technical Solutions.}
Implementing an LSAG-based privacy-preserving protocol on Algorand presents two primary technical challenges: state management and execution constraints.

First, traditional mixers use a global Merkle tree to track deposits, which requires hashing operations for every insertion and Merkle path verifications during withdrawal. Obscura eschews the Merkle accumulator entirely. Instead, it leverages Algorand's \emph{Box Storage} \cite{algorand_box_storage}, a scalable key-value store, to record individual commitments and nullifiers. Membership of each ring point is verified in $O(1)$ time by asserting the existence of its corresponding box, significantly simplifying the state model and reducing cryptographic overhead.

Second, verifying an LSAG signature requires $O(n)$ elliptic curve scalar multiplications and additions, where $n$ is the ring size. This linear complexity quickly exceeds the AVM's strict per-call opcode budget. Obscura overcomes this limitation through a dynamic budget expansion strategy. By issuing a burst of inner application calls to a dedicated, stateless ``dummy'' application before the main verification loop, the protocol aggregates the opcode budgets of multiple grouped transactions into a single execution context. This allows the complete verification of the ring signature within a single block.

\paragraph{Contributions.}
This paper makes the following contributions:
\begin{itemize}
    \item \textbf{Protocol Design:} We formalize Obscura, an LSAG-based privacy protocol tailored for the constraints of the Algorand Virtual Machine, demonstrating that transaction anonymity is achievable without SNARKs or trusted setups.
    \item \textbf{Novel State and Execution Model:} We introduce an $O(1)$ membership verification approach using Box Storage and a dynamic opcode-pooling mechanism via inner transactions, which together circumvent the AVM's state and execution bottlenecks.
    \item \textbf{End-to-End Implementation:} We provide a fully functional, open-source implementation of the protocol, including the PyTeal smart contract, an off-chain Python prover, and a client-side application. We also detail practical privacy-hardening techniques, such as recent-biased decoy selection and cryptographically secure shuffling, to mitigate operational deanonymization risks.
\end{itemize}

\medskip
\noindent
Figure~\ref{fig:arch} illustrates the end-to-end protocol architecture. Section~\ref{sec:prelim} establishes the cryptographic and system preliminaries. Section~\ref{sec:crypto} formalizes the cryptographic construction. Section~\ref{sec:protocol} details the protocol description. Section~\ref{sec:onchain} describes the smart contract execution and opcode pooling strategy. Section~\ref{sec:impl} details the system implementation. Section~\ref{sec:sec} provides a security and privacy analysis under an explicit threat model. Section~\ref{sec:eval} presents the performance evaluation. Finally, Section~\ref{sec:concl} concludes with directions for future research.

\begin{figure}[t]
  \centering
  \includegraphics[width=\textwidth]{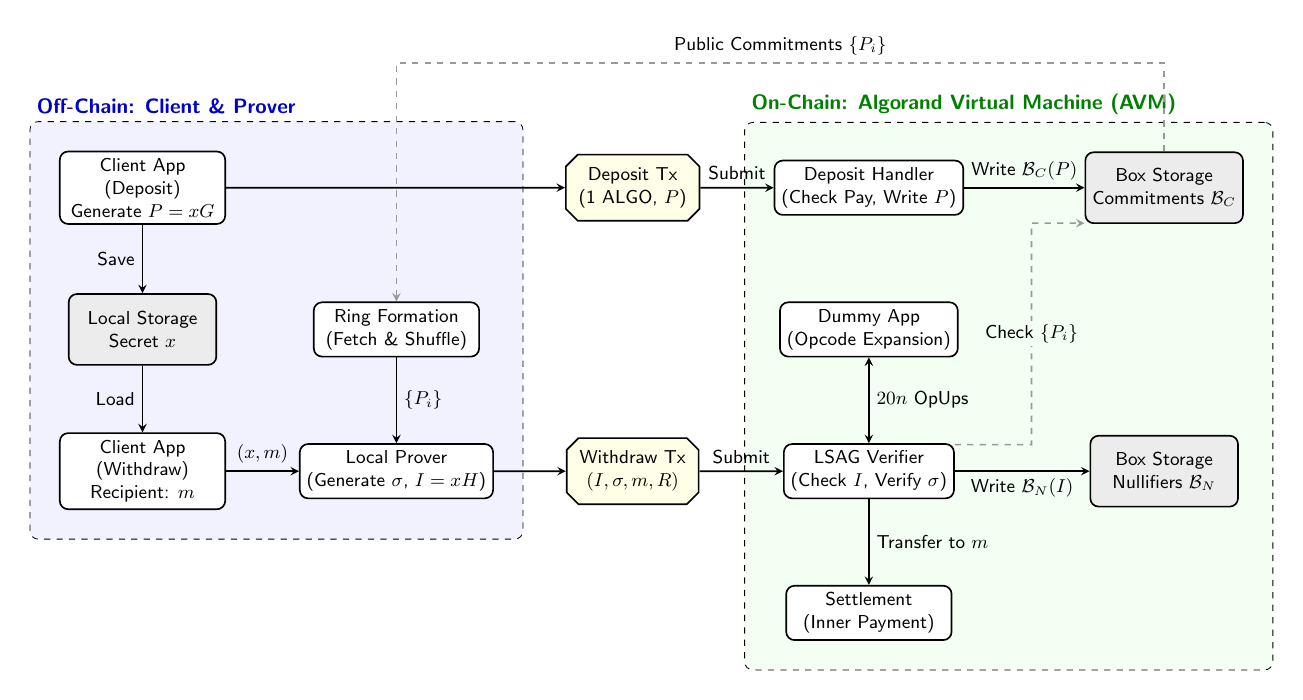}
  \caption{Obscura protocol end-to-end architecture. The lifecycle consists of two phases: deposit and withdrawal. During deposit, the user generates a secret scalar and publishes a public commitment to the smart contract's Box Storage. During withdrawal, the client fetches public commitments to form an anonymity set (ring) and generates an LSAG signature along with a unique key image (nullifier). The on-chain smart contract dynamically expands its opcode budget via inner calls, verifies the signature natively using AVM elliptic-curve operations, ensures the nullifier is unspent, and settles the transaction.}
  \label{fig:arch}
\end{figure}

\section{Preliminaries}
\label{sec:prelim}
This section formalizes the cryptographic primitives and the blockchain execution model that underpin the Obscura protocol.

\subsection{Cryptographic Foundations}

\paragraph{Elliptic Curve Assumptions.}
Let $\mathbb{G}$ be a prime-order subgroup of the BN254 elliptic curve \cite{bos2014elliptic,barreto2005pairing}, with order $q$. We rely on the hardness of the Elliptic Curve Discrete Logarithm Problem (ECDLP) in $\mathbb{G}$ \cite{koblitz1987elliptic,miller1985use}: given a generator $G \in \mathbb{G}$ and a point $P = xG$, it is computationally infeasible to recover the scalar $x \in \mathbb{Z}_q$. Obscura utilizes two independent, publicly verifiable generators $G, H \in \mathbb{G}$ for which the discrete logarithm relationship is unknown.

\paragraph{Linkable Ring Signatures.}
A standard ring signature \cite{rivest2001leak} allows a user to sign a message on behalf of an ad hoc set of public keys (the \emph{ring} or \emph{anonymity set}) without revealing which specific key was used. Unlike group signatures \cite{chaum1991group}, ring signatures are \emph{spontaneous}: no group manager, setup procedure, or coordination among ring members is required. Discrete-logarithm-based instantiations \cite{abe2002one} realize the ring structure through proofs of partial knowledge \cite{cramer1994proofs}. While this provides signer ambiguity, it lacks a mechanism to prevent a user from signing multiple times using the same secret key.

\emph{Linkable Spontaneous Anonymous Group} (LSAG) signatures \cite{liu2004linkable} extend this concept by introducing a \emph{key image} (or \emph{linking tag}). For a secret key $x$ and its corresponding public key (commitment) $P = xG$, the key image is deterministically derived as $I = xH$. This key-image mechanism was popularized by the CryptoNote \cite{van2013cryptonote} and underpins the MLSAG \cite{noether2016ring} and CLSAG \cite{goodell2019concise} constructions deployed in Monero. The LSAG construction ensures three critical security properties:
\begin{enumerate}
    \item \textbf{Anonymity (Signer Ambiguity):} Given a signature, a ring of size $n$, and a key image $I$, an adversary cannot determine the signer's index with probability significantly greater than $1/n$.
    \item \textbf{Unforgeability:} It is computationally infeasible to produce a valid signature for a ring without knowing the discrete logarithm of at least one public key in that ring.
    \item \textbf{Linkability:} Any two signatures generated using the same secret key $x$ will necessarily produce the identical key image $I$. In a decentralized ledger, this property is leveraged to prevent double-spending by ensuring $I$ is recorded and checked against future transactions \cite{noether2016ring}.
\end{enumerate}
The non-interactive nature of the signature is achieved via the Fiat--Shamir heuristic \cite{fiat1986prove}, modeling the cryptographic hash function as a random oracle \cite{bellare1993random}.

\subsection{Blockchain Execution Model}

While public blockchains provide robust consensus and transparency, they expose the entire transaction graph to public scrutiny. Privacy-preserving smart contracts must obscure this graph while operating within the strict resource bounds of the consensus layer. Obscura is designed for the Algorand Virtual Machine (AVM) \cite{algorand_avm}, which imposes specific architectural considerations:

\paragraph{Execution Constraints and Opcode Budget.}
Smart contracts on the AVM execute within a strict, per-transaction computational limit (opcode budget). While the AVM natively supports elliptic-curve operations on BN254 (e.g., \texttt{EcAdd}, \texttt{EcScalarMul}), verifying an LSAG signature requires $O(n)$ such operations. To circumvent the per-transaction budget limit, the AVM allows \emph{inner transactions}, application calls made by the contract itself. By issuing a burst of inner calls to a stateless ``dummy'' application, a contract can dynamically pool the opcode budgets of multiple grouped transactions into a single execution context, enabling the verification of large rings.

\paragraph{State Management via Box Storage.}
Traditional privacy pools maintain a global Merkle tree of deposits to verify membership succinctly. However, maintaining a global state accumulator introduces storage bottlenecks and requires users to provide Merkle inclusion proofs. Obscura utilizes Algorand's \emph{Box Storage}, an unbounded key-value store scoped to the application. Commitments and key images are stored directly as discrete boxes. This allows the smart contract to verify the membership of any public key in $O(1)$ time by simply asserting the existence of its corresponding box, fundamentally altering the state architecture from a global accumulator to a flat, verifiable key-value mapping.

\section{Cryptographic Construction}
\label{sec:crypto}
This section formalizes the Obscura protocol's instantiation of Linkable Spontaneous Anonymous Group (LSAG) signatures over the BN254 elliptic curve. We define the construction in terms of three primary algorithms: \textsf{KeyGen}, \textsf{Sign}, and \textsf{Verify}.

\subsection{System Parameters and Hash Instantiation}
Let $\mathbb{G}$ be the prime-order subgroup of the BN254 elliptic curve, with prime order $q \approx 2^{254}$. We define two independent, publicly verifiable generators $G, H \in \mathbb{G}$. The discrete logarithm $\log_G H$ is unknown. Points in $\mathbb{G}$ are serialized as 64-byte uncompressed big-endian coordinates to align with the Algorand Virtual Machine (AVM) \texttt{EcAdd} and \texttt{EcScalarMul} opcodes \cite{avm_opcodes,algorand_avm}.

We require a cryptographic hash function modeled as a random oracle \cite{bellare1993random}, $\mathcal{H}: \{0,1\}^* \to \mathbb{Z}_q$. To map the output into a scalar within the AVM's computational constraints, the protocol instantiates $\mathcal{H}$ using $\mathrm{SHA\text{-}256}$ \cite{nist2015sha} followed by a bitwise AND with a fixed 255-bit mask. This clears the most significant bit of the 256-bit digest, constraining the challenge to a 255-bit integer; scalars are subsequently reduced modulo $q$ by the curve arithmetic. Appendix~\ref{app:formal} analyzes the induced challenge distribution and shows that this instantiation preserves the security of the Fiat--Shamir transform.

\subsection{Key Generation (\textsf{KeyGen})}
To participate in the protocol (i.e., make a deposit), a user executes the randomized key generation algorithm:
\begin{enumerate}
    \item Sample a secret scalar $x \xleftarrow{\$} \mathbb{Z}_q$.
    \item Compute the public commitment $P = xG \in \mathbb{G}$.
    \item Compute the key image (nullifier) $I = xH \in \mathbb{G}$.
\end{enumerate}
The user publishes $P$ to the smart contract during the deposit phase. The secret key $x$ and the key image $I$ are kept private until the withdrawal phase.

\subsection{Signature Generation (\textsf{Sign})}
To withdraw funds anonymously, the user acts as a prover to generate an LSAG signature $\sigma$. Let $m \in \{0,1\}^{256}$ be the transaction message, defined in Obscura as the 32-byte public key of the withdrawal recipient. This binds the signature to the specific recipient, preventing replay attacks.

The prover selects an anonymity set (ring) of $n$ public commitments $R = \{P_0, \dots, P_{n-1}\}$ currently stored on-chain, ensuring their own commitment $P$ is included at a secret index $\pi \in [0, n-1]$ such that $P_\pi = xG$. The signature generation proceeds as follows:

\begin{enumerate}
    \item \textbf{Initialization:} Sample a random scalar $\alpha \xleftarrow{\$} \mathbb{Z}_q$. Compute the initial curve points for the true signer index:
    \begin{align*}
        L_\pi &= \alpha G \\
        R_\pi &= \alpha H
    \end{align*}
    Compute the subsequent challenge $c_{\pi+1} = \mathcal{H}(m \| L_\pi \| R_\pi)$.
    
    \item \textbf{Decoy Simulation:} For each decoy index $i = \pi+1, \dots, n-1, 0, \dots, \pi-1$ (evaluating indices modulo $n$):
    \begin{enumerate}
        \item Sample a random response $s_i \xleftarrow{\$} \mathbb{Z}_q$.
        \item Compute the curve points using the running challenge $c_i$:
        \begin{align*}
            L_i &= s_i G + c_i P_i \\
            R_i &= s_i H + c_i I
        \end{align*}
        \item Compute the next challenge $c_{i+1} = \mathcal{H}(m \| L_i \| R_i)$.
    \end{enumerate}
    
    \item \textbf{Closure:} Close the ring by computing the response for the true signer index $\pi$:
    \[
        s_\pi = \alpha - c_\pi x \pmod q
    \]
\end{enumerate}
The resulting signature is the tuple $\sigma = (c_0, s_0, \dots, s_{n-1})$.

\subsection{Signature Verification (\textsf{Verify})}
The smart contract acts as the verifier. It receives the ring $R$, the key image $I$, the message $m$, and the signature $\sigma$. Verification requires $O(n)$ elliptic curve operations and proceeds linearly:

\begin{enumerate}
    \item Initialize the running challenge $c = c_0$.
    \item For $i = 0, \dots, n-1$:
    \begin{enumerate}
        \item Compute $L_i = s_i G + c P_i$.
        \item Compute $R_i = s_i H + c I$.
        \item Update the challenge $c \leftarrow \mathcal{H}(m \| L_i \| R_i)$.
    \end{enumerate}
    \item \textbf{Accept} if and only if the final challenge $c$ equals the initial $c_0$.
\end{enumerate}

The algebraic correctness relies on the fact that at the true index $\pi$, the verifier computes $L_\pi = s_\pi G + c_\pi P_\pi = (\alpha - c_\pi x)G + c_\pi(xG) = \alpha G$, and similarly $R_\pi = \alpha H$, matching the prover's initialization.

\subsection{Proof Serialization}
To optimize on-chain parsing, the signature and ring are serialized into a tightly packed byte array (\texttt{zk\_proof}) structured as:
\begin{equation}
    \underbrace{n}_{\text{1 byte}} \ \|\ \underbrace{P_0 \| \dots \| P_{n-1}}_{n \times 64 \text{ bytes}} \ \|\ \underbrace{c_0}_{\text{32 bytes}} \ \|\ \underbrace{s_0 \| \dots \| s_{n-1}}_{n \times 32 \text{ bytes}}
    \label{eq:pack}
\end{equation}
The total payload size is exactly $96n + 33$ bytes. The smart contract extracts $c_0$ and the responses $\{s_i\}$ directly from this payload, executing the \textsf{Verify} algorithm iteratively while dynamically expanding its opcode budget via inner transactions.

\section{Protocol Description}
\label{sec:protocol}
The Obscura protocol operates in two primary phases: deposit (commitment) and withdrawal (anonymous spending). We formalize the transaction semantics, detailing the strict separation between off-chain cryptographic generation and on-chain state transitions.

\subsection{Deposit Phase}
The deposit phase allows a user to insert funds into the anonymity pool by registering a public commitment. To ensure atomicity between the fund transfer and the state update, the deposit is executed as an \emph{atomic transaction group} on the Algorand blockchain.

\paragraph{Off-Chain Execution.}
The user samples a secret scalar $x \xleftarrow{\$} \mathbb{Z}_q$ and computes the public commitment $P = xG$. The user securely stores $x$ locally; it is never transmitted to the network.

\paragraph{On-Chain State Transition.}
The user submits an atomic group consisting of two transactions:
\begin{enumerate}
    \item A smart contract application call invoking the \textsf{Deposit} method with argument $P$.
    \item A payment transaction transferring exactly $10^6$ micro-ALGO (1 ALGO) to the smart contract's escrow address.
\end{enumerate}
Upon receiving the deposit transaction group, the smart contract verifies the payment amount and the receiver. It then checks the global Box Storage for the existence of a commitment box $\mathcal{B}_C$ keyed by a prefix of the commitment, specifically \texttt{"c"}$\,\|\,P[0\!:\!32]$. If the box already exists, the transaction aborts. Otherwise, the contract allocates the box, writes the full 64-byte $P$ as its value, and increments a global deposit counter. The complete sequence of operations for the deposit phase is illustrated in Figure~\ref{fig:seq_deposit}.

\begin{figure}[htbp]
\centering
\includegraphics[width=0.88\textwidth]{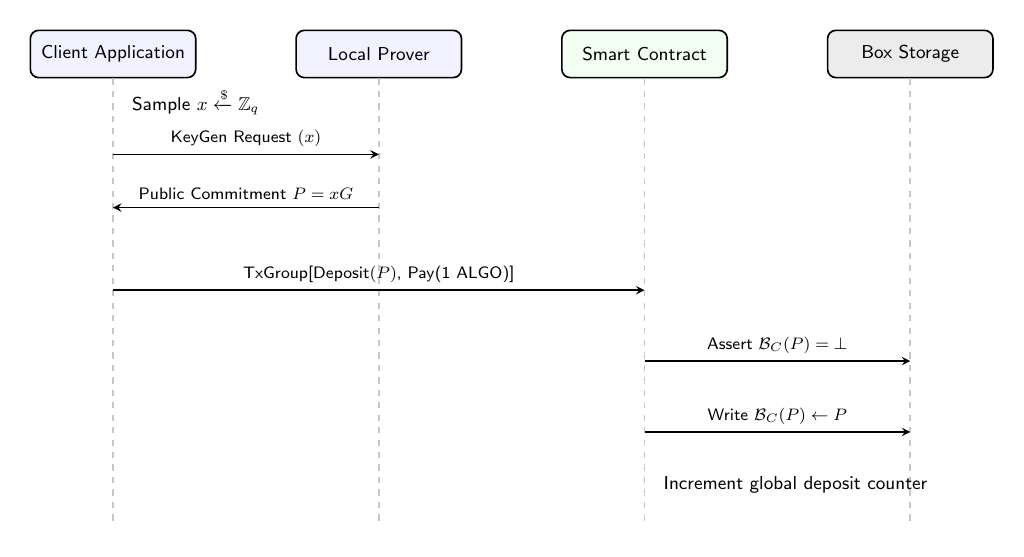}
\caption{Deposit phase sequence diagram. The client application generates a secret scalar and delegates the \textsf{KeyGen} elliptic curve operations to the local prover. The client then submits an atomic transaction group to the smart contract, which verifies the payment and allocates a new commitment box in on-chain storage.}
\label{fig:seq_deposit}
\end{figure}

\subsection{Anonymity Set Construction}
Prior to withdrawal, the user must construct an anonymity set (ring) $R = \{P_0, \dots, P_{n-1}\}$. This process occurs entirely off-chain to prevent the network from learning the true signer's index.

The client queries the blockchain's indexer to retrieve a set of public commitments. To mitigate temporal intersection attacks, the client selects decoys using a recency-biased distribution, favoring recently added commitments; empirical studies of deployed ring-signature systems show that decoy sampling that fails to mimic realistic spending behavior enables statistical deanonymization \cite{moser2018empirical,kumar2017traceability}. The client then injects their own commitment $P$ into the set and applies a Fisher--Yates shuffle \cite{fisher1953statistical}. This ensures the true commitment is uniformly distributed within the ring, providing optimal $1/n$ signer ambiguity.

\subsection{Withdrawal Phase}
The withdrawal phase allows a user to spend a previously deposited fund to a new recipient address $m$ without revealing which commitment is being spent.

\paragraph{Off-Chain Execution.}
The user computes the key image $I = xH$ and executes the \textsf{Sign} algorithm defined in Section~\ref{sec:crypto} over the ring $R$ and message $m$. The resulting signature $\sigma$ is packed into a compact byte array.

\paragraph{On-Chain State Transition.}
The user submits a transaction group (consisting of a main application call and optionally auxiliary calls to supply additional box references if $n$ is large) invoking the \textsf{Withdraw} method, providing the key image $I$, the packed signature $\sigma$, the recipient address $m$, and an array of box references corresponding to $I$ and all $P_i \in R$. 

Upon receiving the transaction, the smart contract validates the anonymity set against on-chain storage, verifies the LSAG signature natively, and ensures the key image $I$ has not been previously recorded. The precise on-chain execution pipeline and opcode pooling strategy required to perform this validation are detailed in Section~\ref{sec:onchain}. Upon successful verification, the contract allocates the nullifier box $\mathcal{B}_N$ to permanently record $I$ as spent, and issues an inner payment transaction transferring the deposit amount (minus protocol and storage fees) to the recipient $m$.

Figure~\ref{fig:seq_withdraw} details the complete sequence of off-chain and on-chain interactions during the withdrawal phase.

\begin{figure}[htbp]
\centering
\includegraphics[width=\textwidth]{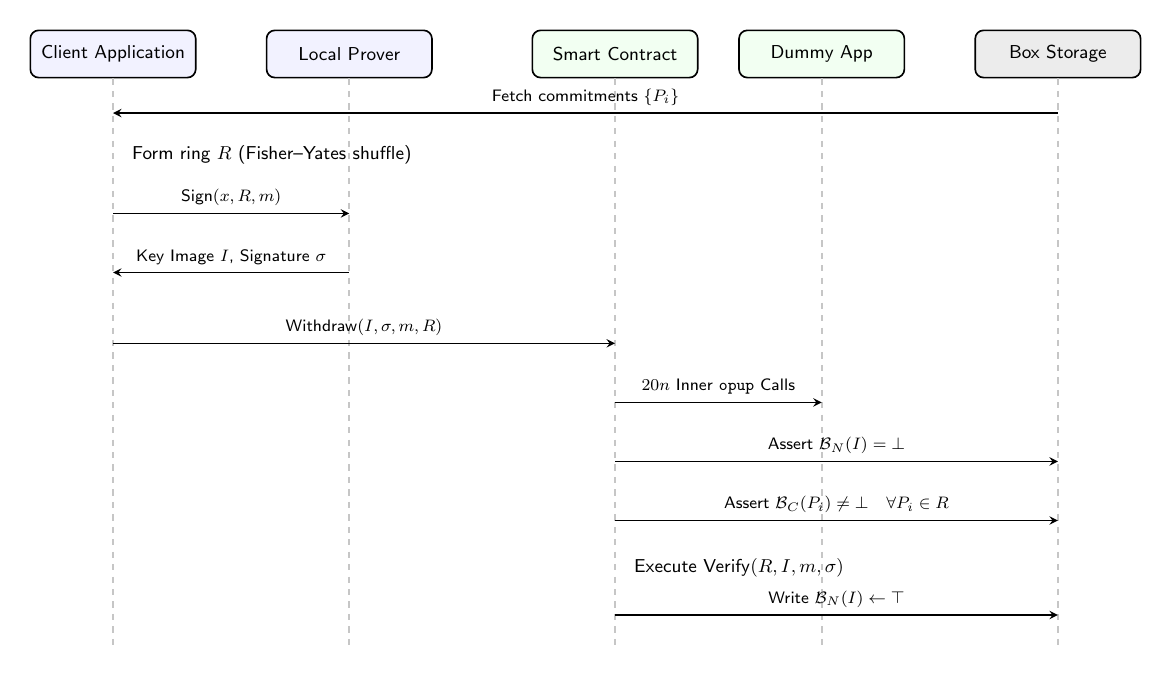}
\caption{Withdrawal phase sequence diagram. The client application constructs an anonymity set from on-chain public commitments and delegates signature generation to the local prover. The smart contract dynamically expands its opcode budget via inner calls to a dummy application, validates the ring members and key image against Box Storage, verifies the LSAG signature, and settles the payment.}
\label{fig:seq_withdraw}
\end{figure}

\subsection{Double-Spend Prevention and Linkability}
The protocol's security relies on the strict decoupling of the deposit and withdrawal phases. The only cryptographic link between a deposit $P$ and a withdrawal is the shared secret $x$, knowledge of which is proven in zero-knowledge \cite{goldwasser1989knowledge} via the LSAG signature. 

Double-spending is deterministically prevented by the algebraic properties of the key image $I = xH$. Because $H$ is a fixed generator, any attempt to spend $P$ a second time will inevitably yield the identical key image $I$. The smart contract's enforcement of the $\mathcal{B}_N$ box uniqueness guarantees that $I$ acts as a globally unique, unforgeable serial number for the underlying deposit, preserving ledger integrity without compromising signer ambiguity.

\section{Smart Contract Execution}
\label{sec:onchain}
Executing complex cryptographic protocols on high-throughput blockchains requires reconciling the computational intensity of the verifier with the strict execution limits of the underlying virtual machine. This section details how Obscura implements the \textsf{Verify} algorithm within the Algorand Virtual Machine (AVM), leveraging specific opcode strategies to ensure privacy-preserving transaction validation.

\subsection{Dynamic Opcode Pooling and Deterministic Budgeting}
To maintain high network throughput, the AVM imposes a strict computational limit, referred to as the \emph{opcode budget}, on every smart contract invocation. A standard application call is allocated a base budget of 700 units \cite{algorand_sc_costs_constraints}. However, verifying a single member of an LSAG ring over the BN254 curve requires four scalar multiplications (\texttt{EcScalarMul}) and two point additions (\texttt{EcAdd}). In the AVM, opcode costs are strictly constant and deterministic; they do not vary based on network congestion, input values, or blockchain state. Table~\ref{tab:opcode_costs} provides the exact mathematical breakdown of the AVM opcode costs required to verify a single ring member, totaling 7,608 opcodes.

To overcome this limitation without requiring interactive verification or off-chain SNARK proofs, Obscura employs a dynamic budget expansion strategy known as \emph{opcode pooling}. The AVM allows the execution budget of multiple transactions within an atomic group (including inner transactions) to be pooled into a shared execution context \cite{algorand_tx_fees}. 

Before executing the cryptographic verification loop, the Obscura smart contract issues $12n$ inner application calls to a stateless, foreign ``dummy'' application. This dummy application exposes a single \texttt{opup} method that performs no logic and simply approves the call. Because each inner transaction contributes an additional 700 units to the shared pool, $12n$ calls dynamically provision sufficient computational headroom to verify a ring of size $n$ entirely on-chain. This trades a marginally higher transaction fee for the ability to execute non-interactive, single-epoch verification.
\begin{table}[ht]
\centering
\caption{Deterministic AVM Opcode Costs for LSAG Verification (Per Ring Member)}
\label{tab:opcode_costs}
\begin{tabularx}{\textwidth}{@{}X r@{}}
\toprule
\textbf{Operation} & \textbf{Opcode Cost} \\
\midrule
$4 \times$ \texttt{EcScalarMul} (BN254g1) ($4 \times 1810$) & 7,240 \\
$2 \times$ \texttt{EcAdd} (BN254g1) ($2 \times 125$) & 250 \\
$1 \times$ \texttt{Sha256} ($1 \times 35$) & 35 \\
$1 \times$ \texttt{BytesAnd} ($1 \times 6$) & 6 \\
$1 \times$ \texttt{App.box\_length} ($1 \times 1$) & 1 \\
Loop Overhead and Control Flow & $\sim$76 \\
\midrule
\textbf{Total Exact Cost per Member ($C$)} & \textbf{7,608} \\
\bottomrule
\end{tabularx}
\vspace{0.5cm}

\caption{Inner Transaction Budget Gain and Multiplier Calculation}
\label{tab:budget_gain}
\begin{tabularx}{\textwidth}{@{}X r@{}}
\toprule
\textbf{Budget Parameter} & \textbf{Value (Opcodes)} \\
\midrule
Budget Added per Inner Call & +700 \\
Main Loop Inner Txn Submit Overhead & -23 \\
\midrule
\textbf{Net Budget Gain per Call ($G$)} & \textbf{677} \\
\bottomrule
\end{tabularx}
\end{table}

To increase the available budget, the contract executes a loop that constructs and submits inner application calls. As detailed in Table~\ref{tab:budget_gain}, building and submitting an inner transaction introduces an overhead of 23 opcodes in the main transaction. Since each inner transaction adds exactly 700 opcodes of pooled budget, the net budget gain per inner application call is 677 opcodes.

Let $n$ be the ring size and $M$ be the inner transaction multiplier (the number of inner calls per ring member). The total opcode cost is $7608n + 23Mn + 115$ (where 115 accounts for non-loop initialization overhead), and the total opcode budget provided is $700 + 700Mn$. For successful execution, the budget must strictly exceed the cost:
\begin{align*}
700 + 700Mn &> 7608n + 23Mn + 115 \\
677Mn &> 7608n - 585 \\
677M &> 7608 - \frac{585}{n}
\end{align*}

For the maximum practical ring size $n = 19$, this simplifies to $677M > 7577.2$, requiring $M > 11.19$. Because the multiplier must be an integer, the absolute theoretical minimum multiplier is $M = 12$. At this setting, each member is allocated a budget of 8,400 opcodes (net gain of 8,124 opcodes), providing a precise 516-opcode (7.2\%) safety margin over the required 7,608 opcodes. By utilizing this exact deterministic minimum, the protocol guarantees successful execution while achieving the absolute lowest possible user transaction fees.

\subsection{On-Chain Verification Pipeline}
The on-chain verification pipeline strictly orders state validation and cryptographic execution to ensure soundness and prevent resource exhaustion attacks. The execution path for the withdrawal proceeds as follows:

\begin{enumerate}
    \item \textbf{Budget Provisioning:} The contract executes the $12n$ inner \texttt{opup} calls to expand the available opcode budget.
    \item \textbf{Double-Spend Assertion:} The contract evaluates the key image $I$ provided in the transaction arguments. It executes a state read against the Box Storage to assert that the nullifier box $\mathcal{B}_N$ (\texttt{"n"}$\,\|\,I[0\!:\!32]$) has a length of zero (i.e., it does not exist). If the box exists, the execution halts immediately, preventing double-spending.
    \item \textbf{Anonymity Set Validation:} The contract parses the packed proof payload to extract the ring size $n$ and the array of public commitments $\{P_0, \dots, P_{n-1}\}$. For each $P_i$, the contract asserts the existence of its corresponding commitment box $\mathcal{B}_C$ (\texttt{"c"}$\,\|\,P_i[0\!:\!32]$). This enforces that the anonymity set is composed exclusively of valid deposits, preventing an adversary from injecting fabricated public commitments to trivially forge a signature.
    \item \textbf{Cryptographic Loop:} The contract initializes the running challenge $c = c_0$. For each index $i \in [0, n-1]$, it executes the core LSAG verification logic using native AVM opcodes:
    \begin{align*}
        L_i &= \texttt{EcAdd}(\texttt{EcScalarMul}(G, s_i), \texttt{EcScalarMul}(P_i, c)) \\
        R_i &= \texttt{EcAdd}(\texttt{EcScalarMul}(H, s_i), \texttt{EcScalarMul}(I, c)) \\
        h_i &= \mathrm{SHA\text{-}256}(m \| L_i \| R_i) \\
        c' &= \texttt{BytesAnd}(h_i, \text{mask})
    \end{align*}
    The running challenge is updated ($c \leftarrow c'$). The \texttt{BytesAnd} opcode applies a fixed 255-bit mask to the hash digest, ensuring the resulting scalar is strictly within the field $\mathbb{Z}_q$.
    \item \textbf{Closure and Settlement:} After the loop terminates, the contract asserts that the final challenge $c$ strictly equals the initial $c_0$ provided in $\sigma$. Upon successful closure, the contract allocates $\mathcal{B}_N$ to record the key image $I$ and issues the inner payment transaction to the recipient $m$.
\end{enumerate}

\subsection{Execution-Layer Atomicity}
The AVM's execution model guarantees that state transitions are strictly atomic. If any assertion fails, whether due to an invalid signature closure, a non-existent commitment box, or an already-spent key image, the entire transaction group reverts. This ensures that the nullifier is never recorded unless the signature is cryptographically valid, and funds are never disbursed unless the nullifier is successfully recorded. Furthermore, because the recipient address $m$ is explicitly bound into the hash chain $h_i$, any attempt by an adversary or a malicious node to intercept the transaction and alter the payout destination will deterministically break the signature closure, causing the execution to revert.

\section{System Implementation}
\label{sec:impl}
The Obscura protocol is realized as a full-stack decentralized application, bridging off-chain cryptographic generation with on-chain verification. The implementation is partitioned into three distinct layers: the on-chain smart contract, the off-chain cryptographic prover, and the client-side application. This architecture strictly isolates sensitive key material from public network transmission while optimizing for the execution constraints of the Algorand Virtual Machine (AVM). Figure~\ref{fig:impl_arch} illustrates the system's architecture and trust boundaries.

\begin{figure}[htbp]
\centering
\includegraphics[width=0.95\textwidth]{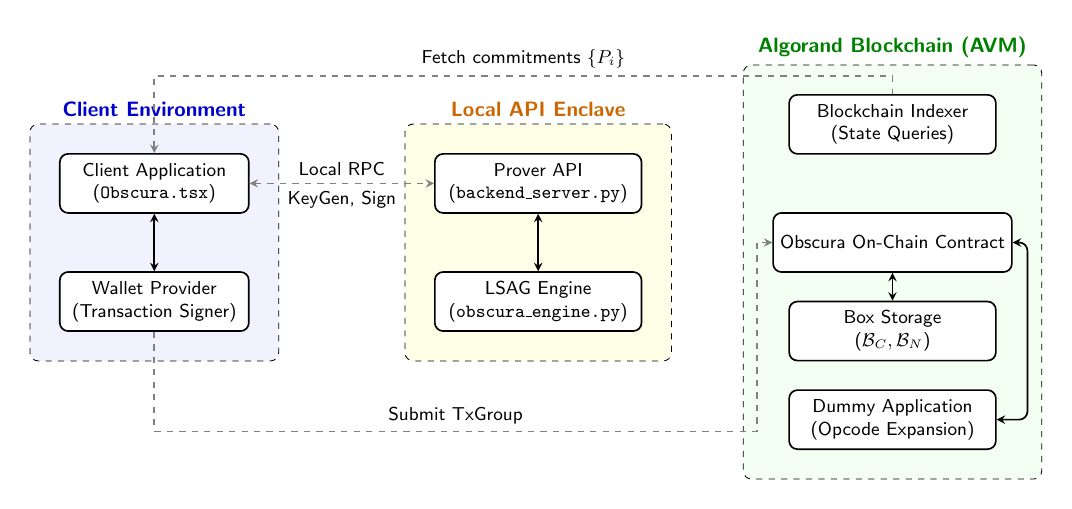}
\caption{Obscura implementation architecture. The system is partitioned into three distinct trust domains. The client environment handles decoy selection and transaction construction. The local API enclave executes heavy elliptic curve operations without exposing the secret scalar $x$ to the network. The Algorand blockchain enforces state transitions, utilizing the indexer for state queries, Box Storage for membership checks, and a dummy application for opcode pooling.}
\label{fig:impl_arch}
\end{figure}

\subsection{Smart Contract Layer}
The verification logic and state management are implemented in PyTeal \cite{algorand_pyteal} and compiled to AVM bytecode. The contract acts as the ultimate arbiter of the protocol, enforcing the atomic state transitions defined in Section~\ref{sec:protocol}. 

To circumvent the storage limitations of global and local state schemas, the contract utilizes Algorand Box Storage. Commitment boxes and nullifier boxes are instantiated using 33-byte keys, prefixed with \texttt{"c"} and \texttt{"n"} respectively, concatenated with the first 32 bytes of the curve point, mapping to their full 64-byte values. This flat, key-value architecture enables $O(1)$ membership and double-spend checks, entirely avoiding the computational overhead of maintaining and verifying Merkle tree accumulators on-chain.

\subsection{Off-Chain Prover Enclave}
Heavy elliptic curve operations required for signature generation are delegated to a dedicated Python-based prover engine. This engine implements the \textsf{KeyGen} and \textsf{Sign} algorithms and exposes them via a stateless HTTP prover API. 

When a user initiates a withdrawal, the client application transmits the secret scalar $x$, the selected anonymity set $R$, and the recipient address $m$ to the prover. The prover computes the key image $I$ and the LSAG signature $\sigma$, returning them to the client. In a production environment, this prover must be operated as a trusted local enclave (e.g., compiled to WebAssembly within the browser) to ensure the secret scalar $x$ is never exposed to an untrusted network.

\subsection{Client Application and Decoy Selection}
The frontend is a React/TypeScript application that manages user interaction, wallet integration (via the Pera Wallet connector \cite{pera_connect}), and transaction construction. 

A critical security responsibility of the client application is the construction of the anonymity set. To mitigate temporal intersection attacks, the client queries the blockchain indexer for recent deposits, maintaining a recency-biased pool of commitments. It selects $n-1$ decoys and injects the user's true commitment $P$. To prevent positional bias that could deanonymize the user, the client applies a Fisher--Yates shuffle to the ring. Crucially, this shuffle is seeded using the cryptographically secure \texttt{crypto.getRandomValues()} API, thereby avoiding the use of weak pseudorandom number generators \cite{mdn_crypto_getrandomvalues}.

\subsection{Engineering Constraints and Parameterization}
To minimize transaction size and parsing overhead on-chain, the signature $\sigma$ and the ring $R$ are serialized into a tightly packed byte array before transmission. As defined in Equation~\ref{eq:pack}, the payload strictly follows the format: a 1-byte ring size $n$, followed by $n$ 64-byte commitments, a 32-byte initial challenge $c_0$, and $n$ 32-byte scalar responses. This results in a deterministic payload size of $96n + 33$ bytes.

The ring size $n$ is a configurable protocol parameter, taking values from $n=1$ to $n=19$. This upper limit is dictated by the AVM consensus parameter \texttt{MaxAppTotalArgLen}, which restricts the total size of application arguments to 2048 bytes. Because the proof payload requires exactly $96n + 33$ bytes, $n=19$ provides the maximum anonymity set that fits within this limit ($1857$ bytes for the proof, plus additional routing and state arguments). While smaller than the anonymity sets found in SNARK-based protocols, this parameterized design allows users to select an optimal balance between signer ambiguity and the linear execution costs of native elliptic-curve operations on Algorand.

\section{Security and Privacy Analysis}
\label{sec:sec}
We evaluate the security and privacy guarantees of the Obscura protocol under a formal adversarial model, separating cryptographic properties from operational and system-level limitations. Formal definitions, theorems, and complete proofs are deferred to Appendix~\ref{app:formal}.

\subsection{Threat Model and Assumptions}
We assume a public, immutable ledger where all state transitions, smart contract payloads, and Box Storage contents are perfectly visible to a Globally Passive Adversary (GPA), following standard anonymity terminology \cite{pfitzmann2010terminology}. The adversary may also act actively by injecting malicious transactions, attempting to double-spend, or providing malformed cryptographic payloads. We assume the underlying Algorand consensus mechanism \cite{gilad2017algorand} and the AVM execution environment are correct and secure.

Crucially, we assume the off-chain \textsf{Sign} algorithm is executed within a trusted local enclave controlled by the user. Delegating proof generation to an untrusted remote server trivially violates privacy, as the server must observe the secret scalar $x$ to compute the signature.

\subsection{Cryptographic Guarantees}

\paragraph{Unforgeability and Soundness.}
The protocol's soundness relies on the existential unforgeability of the LSAG construction under chosen-message attacks (EUF-CMA) \cite{goldwasser1988digital}. Assuming the hardness of the Elliptic Curve Discrete Logarithm Problem (ECDLP) on the BN254 curve, and modeling the hash function $\mathcal{H}$ as a random oracle, it is computationally infeasible for an adversary to produce a valid signature $\sigma$ without knowledge of the secret scalar $x$ corresponding to at least one public commitment $P_i$ in the ring $R$ \cite{liu2004linkable}.

\paragraph{Non-Malleability and Replay Resistance.}
To prevent transaction malleability and replay attacks, the recipient's public key $m$ is explicitly bound into the Fiat--Shamir challenge chain: $h_i = \mathrm{SHA\text{-}256}(m \| L_i \| R_i)$. If an active network adversary intercepts a valid withdrawal transaction and attempts to redirect the funds by altering $m$, the hash outputs will diverge entirely. This breaks the algebraic closure of the ring (i.e., the final challenge will not equal $c_0$), causing the smart contract to deterministically reject the transaction.

\paragraph{Linkability and Double-Spend Resistance.}
Double-spending is prevented via the cryptographic property of linkability. The protocol guarantees that any two signatures generated from the same secret scalar $x$ will yield the identical key image $I = xH$. Because the AVM smart contract atomically asserts the non-existence of the nullifier box $\mathcal{B}_N$ prior to executing the payout, the system achieves deterministic double-spend resistance. An adversary cannot reuse a spent commitment, even if they select a completely different set of decoys for the second ring.

\subsection{Privacy Analysis and Practical Limitations}

\paragraph{Signer Ambiguity.}
The primary privacy guarantee of the protocol is signer ambiguity. Given a valid signature $\sigma$ over a ring $R$ of size $n$, a computationally bounded adversary cannot identify the true signer's index $\pi$ with a probability significantly greater than $1/n$. The true commitment $P$ and the key image $I$ are generated using independent base points ($G$ and $H$), rendering them computationally unlinkable without knowledge of $x$ under the Decisional Diffie--Hellman assumption in $\mathbb{G}$ (Theorem~\ref{thm:anonymity}, Appendix~\ref{app:formal}).

\paragraph{Auditability and Selective Disclosure.}
While Obscura provides default anonymity to the public, it inherently supports user-controlled selective transparency. Because the off-chain client retains the secret scalar $x$, a user can cryptographically prove the provenance and destination of their funds to an authorized third party (e.g., an auditor or regulatory body). By disclosing $x$, the auditor can deterministically verify both the deposit commitment $P = xG$ and the withdrawal key image $I = xH$. This mechanism ensures compatibility with compliance and reporting obligations without compromising the privacy of the broader anonymity set, in the spirit of accountable-privacy designs for decentralized anonymous payments \cite{garman2016accountable}.

\paragraph{Decoy Selection and Intersection Attacks.}
Theoretical anonymity assumes a uniform distribution of fetched commitments. In practice, adversaries can employ temporal intersection attacks or statistical correlation if users predictably select decoys or withdraw immediately after depositing; analogous heuristics have been used to trace a substantial fraction of transactions in early Monero \cite{moser2018empirical,kumar2017traceability} and to deanonymize careless users of Ethereum mixers \cite{beres2021blockchain}. The Obscura client mitigates this by enforcing a recency-biased decoy pool and applying a cryptographically secure shuffle. This ensures the true commitment's position in the on-chain array reveals no information. However, users must still exercise user-side operational discipline (e.g., introducing time delays between deposits and withdrawals) to maximize the effective anonymity set.

\paragraph{Anonymity Set Size Constraints.}
Due to the 2048-byte application argument limit of the AVM, the current testnet implementation caps the ring size at $n=19$. While this provides functional privacy against casual blockchain surveillance and is a significant improvement over smaller rings, the combinatorial hiding remains weaker than SNARK-based accumulators (e.g., Tornado Cash), where the effective anonymity set is bounded only by the number of active deposits in a pool. Obscura trades the theoretical unbounded size of the anonymity set for the ability to execute natively and transparently on a high-throughput, non-pairing-friendly virtual machine.

\paragraph{Post-Quantum Vulnerability.}
The protocol's security relies fundamentally on the hardness of ECDLP. A quantum adversary executing Shor's algorithm \cite{shor1994algorithms} could efficiently recover the secret scalar $x$ from the public commitment $P = xG$. With $x$ exposed, the adversary could compute $I = xH$ to deanonymize that specific transaction (if already spent) or forge a withdrawal for the corresponding commitment (if unspent). While the per-deposit sampling of $x$ isolates the compromise to individual deposits rather than a global identity, this remains a fundamental limitation of all classical elliptic-curve privacy protocols; transitioning to post-quantum security would require replacing the LSAG construction with lattice-based or hash-based alternatives \cite{esgin2019matrict,lu2019raptor}.

\section{Performance Evaluation}
\label{sec:eval}
We evaluate the practical resource footprint of the Obscura protocol, focusing on computational overhead, storage requirements, and the inherent scalability limitations of executing ring signatures natively on the AVM.

\subsection{Computational and Storage Overhead}

\paragraph{Proof Size and Communication Complexity.}
Unlike succinct zero-knowledge proofs (e.g., zk-SNARKs \cite{groth2016size}) which offer $O(1)$ proof sizes, LSAG signatures scale linearly with the anonymity set size $n$. As defined in Equation~\ref{eq:pack}, the serialized \texttt{zk\_proof} payload requires exactly $96n + 33$ bytes. For the maximum ring size of $n=19$, the payload is 1857 bytes, fitting just under the 2048-byte protocol limit. While this is efficient compared to non-elliptic-curve ring signatures, the $O(n)$ scaling imposes a strict upper bound on feasible ring size due to transaction size constraints and increasing network propagation overhead.

\paragraph{On-Chain State Footprint.}
Obscura eschews global state accumulators in favor of Algorand Box Storage. Each deposit allocates a 64-byte commitment box, and each withdrawal allocates a 64-byte nullifier box. This results in an $O(N)$ storage footprint, where $N$ is the total number of transactions in the system's history. While this prevents the computational bottleneck of Merkle tree updates, it shifts the burden to state bloat. To mitigate this, the protocol enforces a strict minimum balance requirement (MBR) for box allocation, which is deducted from the withdrawal payout.

\paragraph{Execution Cost and Opcode Scaling.}
The computational bottleneck of the protocol is the on-chain \textsf{Verify} algorithm. As mathematically proven in Section~\ref{sec:onchain}, each ring member strictly requires 7,608 opcode units for elliptic curve scalar multiplications and additions. To satisfy this, the client must provision $12n$ inner \texttt{opup} calls. This linear scaling of execution cost directly translates to higher transaction fees for the user, as each inner call incurs the network's minimum base fee. Figure~\ref{fig:inner_txs} illustrates the scaling of the required inner transactions, and Figure~\ref{fig:fees} demonstrates the resulting linear impact on the total withdrawal transaction fee, compared to the constant fee of the deposit phase.

\subsection{Scalability Trade-offs}

\begin{figure}[htbp]
\centering
\begin{tikzpicture}
\begin{axis}[
    width=10.64cm,
    height=6.08cm,
    xlabel={Ring Size},
    ylabel={Number of Inner Transactions},
    xmin=1, xmax=19,
    ymin=0, ymax=240,
    grid=major,
    legend pos=north west,
]

\addplot[
    color=sciTeal,
    thick,
    mark=*,
    mark options={solid, fill=sciTeal},
]
coordinates {
(1,13) (2,25) (3,37) (4,49) (5,61)
(6,73) (7,85) (8,97) (9,109) (10,121)
(11,133) (12,145) (13,157) (14,169) (15,181)
(16,193) (17,205) (18,217) (19,229)
};
\addlegendentry{Inner Transactions}

\end{axis}
\end{tikzpicture}
\caption{Number of inner transactions issued during the withdrawal phase versus ring size. To overcome the strict per-transaction AVM opcode budget, the smart contract dynamically groups inner \texttt{opup} calls. Each ring member strictly requires 12 inner transactions to safely cover the 7,608 opcode verification cost, leading to an $O(n)$ scaling in the total number of inner transactions.}
\label{fig:inner_txs}
\end{figure}

\begin{figure}[htbp]
\centering
\begin{tikzpicture}
\begin{axis}[
    width=10.64cm,
    height=6.08cm,
    xlabel={Ring Size},
    ylabel={Transaction Fee (ALGO)},
    xmin=1, xmax=19,
    ymin=0, ymax=0.25,
    grid=major,
    legend pos=north west,
]

\addplot[
    color=sciBlue,
    thick,
    mark=*,
    mark options={solid, fill=sciBlue},
]
coordinates {
(1,0.001) (2,0.001) (3,0.001) (4,0.001) (5,0.001)
(6,0.001) (7,0.001) (8,0.001) (9,0.001) (10,0.001)
(11,0.001) (12,0.001) (13,0.001) (14,0.001) (15,0.001)
(16,0.001) (17,0.001) (18,0.001) (19,0.001)
};
\addlegendentry{Deposit Fee}

\addplot[
    color=sciRed,
    thick,
    dashed,
    mark=square*,
    mark options={solid, fill=sciRed},
]
coordinates {
(1,0.014) (2,0.026) (3,0.038) (4,0.050) (5,0.062)
(6,0.075) (7,0.087) (8,0.099) (9,0.111) (10,0.123)
(11,0.135) (12,0.147) (13,0.159) (14,0.172) (15,0.184)
(16,0.196) (17,0.208) (18,0.220) (19,0.232)
};
\addlegendentry{Withdrawal Fee}

\end{axis}
\end{tikzpicture}
\caption{Deposit and withdrawal transaction fees as a function of ring size. The deposit fee remains constant at 0.001 ALGO because it requires a simple state box allocation. Conversely, the withdrawal fee scales linearly due to the dynamic opcode pooling mechanism, which issues $12n$ inner \texttt{opup} calls per ring member to provision sufficient AVM execution budget.}
\label{fig:fees}
\end{figure}

\begin{figure}[htbp]
\centering
\begin{tikzpicture}
\begin{axis}[
    width=10.64cm,
    height=6.08cm,
    xlabel={Ring Size},
    ylabel={Time (ms)},
    xmin=1, xmax=19,
    ymin=0, ymax=12000,
    grid=major,
    legend pos=north west,
]

\addplot[
    color=sciOrange, 
    thick, 
    mark=*, 
    mark options={solid, fill=sciOrange},
    error bars/.cd, y dir=both, y explicit
]
coordinates {
(1,6069.80) +- (0,1123.00) (2,5802.00) +- (0,827.92) (3,6505.00) +- (0,898.90) 
(4,6040.00) +- (0,1037.41) (5,5351.40) +- (0,421.00) (6,6033.20) +- (0,830.33) 
(7,5109.40) +- (0,909.47) (8,5863.20) +- (0,635.63) (9,5792.40) +- (0,839.60) 
(10,5639.40) +- (0,880.62) (11,5932.40) +- (0,924.09) (12,5618.60) +- (0,807.99) 
(13,6128.40) +- (0,675.44) (14,6598.00) +- (0,609.58) (15,5362.00) +- (0,915.08) 
(16,5490.60) +- (0,1024.81) (17,5064.20) +- (0,618.41) (18,4979.40) +- (0,367.90) 
(19,5580.40) +- (0,863.23)
};
\addlegendentry{Deposit Total Time}

\addplot[
    color=sciPurple, 
    thick, 
    dashed, 
    mark=square*, 
    mark options={solid, fill=sciPurple},
    error bars/.cd, y dir=both, y explicit
]
coordinates {
(1,5965.40) +- (0,975.37) (2,6528.40) +- (0,931.19) (3,7239.40) +- (0,297.55) 
(4,7432.20) +- (0,459.43) (5,7432.80) +- (0,418.36) (6,7430.20) +- (0,496.49) 
(7,7348.60) +- (0,774.43) (8,7504.00) +- (0,804.76) (9,7770.20) +- (0,669.37) 
(10,7958.80) +- (0,981.28) (11,8880.40) +- (0,950.76) (12,9127.80) +- (0,803.57) 
(13,9589.60) +- (0,1019.29) (14,10163.80) +- (0,410.43) (15,10186.20) +- (0,723.91) 
(16,10154.60) +- (0,606.76) (17,9666.60) +- (0,415.93) (18,10114.60) +- (0,442.12) 
(19,10421.40) +- (0,571.55)
};
\addlegendentry{Withdrawal Total Time}

\end{axis}
\end{tikzpicture}
\caption{End-to-end deposit and withdrawal latency across varying ring sizes. The deposit process exhibits a relatively stable latency profile regardless of the anonymity set size, as its operations are constant. In contrast, withdrawal latency increases linearly with the ring size, driven primarily by the $O(n)$ computational overhead of generating the zero-knowledge LSAG signature off-chain and managing larger transaction groups.}
\label{fig:runtime_total}
\end{figure}

\begin{figure}[htbp]
\centering
\begin{tikzpicture}
\begin{axis}[
    width=10.64cm,
    height=6.08cm,
    xlabel={Ring Size},
    ylabel={Proof Generation Time (ms)},
    xmin=1, xmax=19,
    ymin=0, ymax=4500,
    grid=major,
    legend pos=north west,
]

\addplot[
    color=sciTeal, 
    thick, 
    mark=*, 
    mark options={solid, fill=sciTeal},
    error bars/.cd, y dir=both, y explicit
]
coordinates {
(1,640.80) +- (0,150.09) (2,719.00) +- (0,118.64) (3,1013.00) +- (0,144.52) 
(4,1123.20) +- (0,146.84) (5,1351.40) +- (0,168.66) (6,1570.60) +- (0,129.62) 
(7,1622.60) +- (0,40.67) (8,2008.40) +- (0,135.61) (9,2114.00) +- (0,129.42) 
(10,2308.20) +- (0,129.86) (11,2417.20) +- (0,83.33) (12,2779.40) +- (0,138.08) 
(13,3000.20) +- (0,128.20) (14,3104.60) +- (0,189.70) (15,3313.60) +- (0,165.42) 
(16,3585.40) +- (0,111.39) (17,3715.80) +- (0,147.34) (18,3996.20) +- (0,160.06) 
(19,3951.60) +- (0,55.58)
};
\addlegendentry{Proof Generation Time}

\end{axis}
\end{tikzpicture}
\caption{Zero-knowledge ring signature proof generation time versus ring size. The off-chain local prover executes the \textsf{Sign} algorithm using BN254 elliptic curve operations. Since the prover must compute decoy responses for every member in the anonymity set to construct the LSAG signature, the generation time scales linearly $O(n)$.}
\label{fig:proof_time}
\end{figure}

\begin{figure}[htbp]
\centering
\begin{tikzpicture}
\begin{axis}[
    width=10.64cm,
    height=6.08cm,
    xlabel={Ring Size},
    ylabel={Proof Size (bytes)},
    xmin=1, xmax=19,
    ymin=0, ymax=2000,
    grid=major,
    legend pos=north west,
]

\addplot[
    color=sciMagenta, 
    thick, 
    mark=*, 
    mark options={solid, fill=sciMagenta}
]
coordinates {
(1,129.00) (2,225.00) (3,321.00) (4,417.00) (5,513.00)
(6,609.00) (7,705.00) (8,801.00) (9,897.00) (10,993.00)
(11,1089.00) (12,1185.00) (13,1281.00) (14,1377.00) (15,1473.00)
(16,1569.00) (17,1665.00) (18,1761.00) (19,1857.00)
};
\addlegendentry{Proof Size}

\end{axis}
\end{tikzpicture}
\caption{Proof size growth as a function of ring size. The packed \texttt{zk\_proof} payload includes the ring size, $n$ public commitments, an initial challenge, and $n$ scalar responses, resulting in a deterministic size of $96n + 33$ bytes. The strict 2048-byte \texttt{MaxAppTotalArgLen} limit of the AVM dictates the practical upper bound of $n=19$ for a single verification call.}
\label{fig:proof_size}
\end{figure}

\begin{figure}[htbp]
\centering
\begin{tikzpicture}
\begin{axis}[
    width=10.64cm,
    height=6.08cm,
    xlabel={Ring Size},
    ylabel={Time (ms)},
    xmin=1, xmax=19,
    ymin=0, ymax=8000,
    grid=major,
    legend style={at={(0.5,-0.22)}, anchor=north, legend columns=2},
]

\addplot[
    color=sciBlue, thick, mark=*, mark options={solid, fill=sciBlue},
    error bars/.cd, y dir=both, y explicit
] coordinates {
(1,536.60) +- (0,147.66) (2,588.40) +- (0,123.49) (3,593.80) +- (0,114.64) 
(4,531.60) +- (0,143.66) (5,531.00) +- (0,145.49) (6,478.40) +- (0,129.34) 
(7,550.80) +- (0,125.01) (8,487.00) +- (0,135.87) (9,632.40) +- (0,12.84) 
(10,431.20) +- (0,114.83) (11,390.20) +- (0,5.81) (12,540.80) +- (0,147.77) 
(13,582.40) +- (0,108.08) (14,575.20) +- (0,118.37) (15,490.80) +- (0,139.44) 
(16,584.60) +- (0,113.70) (17,481.20) +- (0,134.51) (18,485.40) +- (0,135.69) 
(19,541.00) +- (0,147.51)
};
\addlegendentry{Deposit Detail Generation}

\addplot[
    color=sciCyan, thick, densely dotted, mark=square*, mark options={solid, fill=sciCyan},
    error bars/.cd, y dir=both, y explicit
] coordinates {
(1,98.20) +- (0,11.82) (2,53.40) +- (0,28.77) (3,55.80) +- (0,29.73) 
(4,47.20) +- (0,18.59) (5,45.00) +- (0,12.43) (6,64.40) +- (0,28.55) 
(7,52.20) +- (0,25.97) (8,47.20) +- (0,19.93) (9,53.40) +- (0,29.42) 
(10,53.20) +- (0,27.79) (11,49.20) +- (0,21.84) (12,60.20) +- (0,47.29) 
(13,50.40) +- (0,17.36) (14,54.80) +- (0,30.36) (15,54.80) +- (0,30.78) 
(16,53.60) +- (0,27.68) (17,57.40) +- (0,39.87) (18,45.80) +- (0,17.20) 
(19,52.80) +- (0,29.48)
};
\addlegendentry{Transaction Construction}

\addplot[
    color=sciOrange, thick, dashed, mark=triangle*, mark options={solid, fill=sciOrange},
    error bars/.cd, y dir=both, y explicit
] coordinates {
(1,1262.20) +- (0,216.00) (2,1041.00) +- (0,111.03) (3,1090.60) +- (0,145.80) 
(4,1053.00) +- (0,157.77) (5,995.20) +- (0,64.18) (6,1019.80) +- (0,76.78) 
(7,990.00) +- (0,106.16) (8,999.60) +- (0,168.18) (9,1029.00) +- (0,111.98) 
(10,1003.80) +- (0,74.92) (11,1038.20) +- (0,117.98) (12,1010.00) +- (0,99.63) 
(13,999.40) +- (0,87.39) (14,999.20) +- (0,62.18) (15,975.20) +- (0,54.97) 
(16,1043.40) +- (0,117.67) (17,1035.20) +- (0,168.24) (18,1032.80) +- (0,53.26) 
(19,1010.60) +- (0,77.87)
};
\addlegendentry{Transaction Submission}

\addplot[
    color=sciPurple, thick, dashdotted, mark=diamond*, mark options={solid, fill=sciPurple},
    error bars/.cd, y dir=both, y explicit
] coordinates {
(1,4172.00) +- (0,1085.52) (2,4118.60) +- (0,701.48) (3,4764.60) +- (0,1018.30) 
(4,4407.60) +- (0,1046.39) (5,3779.80) +- (0,550.90) (6,4470.20) +- (0,819.64) 
(7,3516.00) +- (0,995.86) (8,4329.40) +- (0,718.94) (9,4077.20) +- (0,745.64) 
(10,4151.00) +- (0,894.57) (11,4454.40) +- (0,858.78) (12,4007.40) +- (0,843.20) 
(13,4496.00) +- (0,695.19) (14,4968.40) +- (0,605.83) (15,3840.80) +- (0,886.38) 
(16,3808.20) +- (0,936.58) (17,3489.80) +- (0,449.93) (18,3415.80) +- (0,370.25) 
(19,3975.60) +- (0,644.89)
};
\addlegendentry{Network Confirmation}

\addplot[
    color=sciRed, ultra thick, mark=x, mark options={solid},
    error bars/.cd, y dir=both, y explicit
] coordinates {
(1,6069.80) +- (0,1123.00) (2,5802.00) +- (0,827.92) (3,6505.00) +- (0,898.90) 
(4,6040.00) +- (0,1037.41) (5,5351.40) +- (0,421.00) (6,6033.20) +- (0,830.33) 
(7,5109.40) +- (0,909.47) (8,5863.20) +- (0,635.63) (9,5792.40) +- (0,839.60) 
(10,5639.40) +- (0,880.62) (11,5932.40) +- (0,924.09) (12,5618.60) +- (0,807.99) 
(13,6128.40) +- (0,675.44) (14,6598.00) +- (0,609.58) (15,5362.00) +- (0,915.08) 
(16,5490.60) +- (0,1024.81) (17,5064.20) +- (0,618.41) (18,4979.40) +- (0,367.90) 
(19,5580.40) +- (0,863.23)
};
\addlegendentry{Total Deposit Time}

\end{axis}
\end{tikzpicture}
\caption{Breakdown of deposit latency versus ring size. The total deposit time is decomposed into four discrete operational stages: deposit detail generation, transaction construction, transaction submission, and network confirmation. Because the deposit phase involves a constant-time commitment generation and a standard atomic transaction group, all latency components remain independent of the ring size parameter $n$.}
\label{fig:deposit_breakdown}
\end{figure}

\begin{figure}[ht]
\centering
\begin{tikzpicture}
\begin{axis}[
    width=10.64cm,
    height=6.08cm,
    xlabel={Ring Size},
    ylabel={Time (ms)},
    xmin=1, xmax=19,
    ymin=0, ymax=12000,
    grid=major,
    legend style={at={(0.5,-0.22)}, anchor=north, legend columns=2},
]

\addplot[
    color=sciTeal, thick, mark=*, mark options={solid, fill=sciTeal},
    error bars/.cd, y dir=both, y explicit
] coordinates {
(1,602.20) +- (0,172.43) (2,667.60) +- (0,153.25) (3,563.80) +- (0,166.20) 
(4,806.60) +- (0,164.28) (5,798.20) +- (0,140.97) (6,810.60) +- (0,185.89) 
(7,949.60) +- (0,27.45) (8,818.60) +- (0,192.37) (9,924.00) +- (0,160.24) 
(10,940.60) +- (0,160.89) (11,1120.20) +- (0,43.10) (12,1143.60) +- (0,295.04) 
(13,1176.20) +- (0,656.07) (14,1150.40) +- (0,160.51) (15,1242.20) +- (0,90.65) 
(16,1338.40) +- (0,253.68) (17,1302.40) +- (0,151.28) (18,1241.60) +- (0,172.01) 
(19,1445.60) +- (0,45.11)
};
\addlegendentry{Withdrawal Data Generation}

\addplot[
    color=sciBlue, thick, densely dotted, mark=square*, mark options={solid, fill=sciBlue},
    error bars/.cd, y dir=both, y explicit
] coordinates {
(1,640.80) +- (0,150.09) (2,719.00) +- (0,118.64) (3,1013.00) +- (0,144.52) 
(4,1123.20) +- (0,146.84) (5,1351.40) +- (0,168.66) (6,1570.60) +- (0,129.62) 
(7,1622.60) +- (0,40.67) (8,2008.40) +- (0,135.61) (9,2114.00) +- (0,129.42) 
(10,2308.20) +- (0,129.86) (11,2417.20) +- (0,83.33) (12,2779.40) +- (0,138.08) 
(13,3000.20) +- (0,128.20) (14,3104.60) +- (0,189.70) (15,3313.60) +- (0,165.42) 
(16,3585.40) +- (0,111.39) (17,3715.80) +- (0,147.34) (18,3996.20) +- (0,160.06) 
(19,3951.60) +- (0,55.58)
};
\addlegendentry{Proof Generation}

\addplot[
    color=sciMagenta, thick, dashed, mark=triangle*, mark options={solid, fill=sciMagenta},
    error bars/.cd, y dir=both, y explicit
] coordinates {
(1,32.20) +- (0,2.86) (2,32.20) +- (0,3.70) (3,31.00) +- (0,2.92) 
(4,32.60) +- (0,0.89) (5,32.60) +- (0,3.78) (6,33.80) +- (0,1.30) 
(7,33.20) +- (0,2.17) (8,36.00) +- (0,3.87) (9,31.80) +- (0,2.28) 
(10,33.20) +- (0,1.64) (11,34.40) +- (0,1.52) (12,34.20) +- (0,2.68) 
(13,32.80) +- (0,1.92) (14,32.20) +- (0,2.59) (15,33.60) +- (0,4.45) 
(16,34.40) +- (0,1.34) (17,32.00) +- (0,2.12) (18,33.40) +- (0,2.51) 
(19,33.60) +- (0,1.67)
};
\addlegendentry{Transaction Construction}

\addplot[
    color=sciOrange, thick, dashdotted, mark=diamond*, mark options={solid, fill=sciOrange},
    error bars/.cd, y dir=both, y explicit
] coordinates {
(1,837.00) +- (0,27.39) (2,876.60) +- (0,62.42) (3,847.40) +- (0,58.56) 
(4,817.80) +- (0,57.36) (5,843.20) +- (0,61.03) (6,953.20) +- (0,38.07) 
(7,952.00) +- (0,17.36) (8,1027.20) +- (0,101.33) (9,1101.00) +- (0,220.35) 
(10,971.00) +- (0,46.15) (11,1064.60) +- (0,72.95) (12,1064.00) +- (0,129.31) 
(13,1025.40) +- (0,55.48) (14,1248.40) +- (0,126.62) (15,1246.80) +- (0,97.86) 
(16,1233.60) +- (0,65.00) (17,1210.60) +- (0,143.75) (18,1241.80) +- (0,56.05) 
(19,1227.00) +- (0,58.19)
};
\addlegendentry{Transaction Submission}

\addplot[
    color=sciPurple, thick, mark=pentagon*, mark options={solid, fill=sciPurple},
    error bars/.cd, y dir=both, y explicit
] coordinates {
(1,3852.40) +- (0,894.44) (2,4232.80) +- (0,940.40) (3,4784.00) +- (0,365.73) 
(4,4651.80) +- (0,436.26) (5,4408.00) +- (0,451.86) (6,4061.80) +- (0,598.17) 
(7,3791.40) +- (0,747.30) (8,3613.60) +- (0,840.22) (9,3598.80) +- (0,550.24) 
(10,3706.00) +- (0,1033.27) (11,4244.20) +- (0,852.56) (12,4106.40) +- (0,844.41) 
(13,4355.00) +- (0,1260.41) (14,4628.40) +- (0,446.83) (15,4349.40) +- (0,870.07) 
(16,3963.20) +- (0,446.31) (17,3406.00) +- (0,425.66) (18,3601.80) +- (0,345.67) 
(19,3763.40) +- (0,548.85)
};
\addlegendentry{Network Confirmation}

\addplot[
    color=sciGray, ultra thick, dashed, mark=x, mark options={solid},
    error bars/.cd, y dir=both, y explicit
] coordinates {
(1,5965.40) +- (0,975.37) (2,6528.40) +- (0,931.19) (3,7239.40) +- (0,297.55) 
(4,7432.20) +- (0,459.43) (5,7432.80) +- (0,418.36) (6,7430.20) +- (0,496.49) 
(7,7348.60) +- (0,774.43) (8,7504.00) +- (0,804.76) (9,7770.20) +- (0,669.37) 
(10,7958.80) +- (0,981.28) (11,8880.40) +- (0,950.76) (12,9127.80) +- (0,803.57) 
(13,9589.60) +- (0,1019.29) (14,10163.80) +- (0,410.43) (15,10186.20) +- (0,723.91) 
(16,10154.60) +- (0,606.76) (17,9666.60) +- (0,415.93) (18,10114.60) +- (0,442.12) 
(19,10421.40) +- (0,571.55)
};
\addlegendentry{Total Withdrawal Time}

\end{axis}
\end{tikzpicture}
\caption{Breakdown of withdrawal latency versus ring size. The end-to-end withdrawal time is dominated by proof generation, which scales linearly $O(n)$ due to the cryptographic complexity of the LSAG signature. Transaction construction and submission times also exhibit slight growth for larger rings ($n > 6$) as the client must construct and sign additional grouped auxiliary transactions to bypass the AVM's Box Reference limits.}
\label{fig:withdrawal_breakdown}
\end{figure}

\paragraph{Scalability vs. Anonymity.}
The fundamental limitation of Obscura is the trade-off between execution feasibility and the size of the anonymity set. Because both proof size and verification cost scale as $O(n)$, the system cannot practically support the massive anonymity sets achieved by Merkle-based SNARK mixers. The current testnet implementation supports ring sizes up to $n=19$. While this provides significantly stronger combinatorial hiding than smaller rings, it remains vulnerable to sophisticated statistical deanonymization if the overall pool of users is small or highly correlated. Figure~\ref{fig:proof_size} and Figure~\ref{fig:proof_time} illustrate the linear scaling of proof payload size and generation time, respectively. Figure~\ref{fig:runtime_total} shows the end-to-end deposit and withdrawal latency. Figure~\ref{fig:deposit_breakdown} and Figure~\ref{fig:withdrawal_breakdown} provide a detailed breakdown of the latency for each phase of the protocol. To fully contextualize these measurements, the end-to-end execution is instrumented into discrete operational stages.

\paragraph{End-to-End Latency Breakdown.}
For the \textbf{deposit flow} (Figure~\ref{fig:deposit_breakdown}), the latency is decomposed into:
\begin{itemize}
    \item \emph{Deposit detail generation:} The duration required to generate the cryptographically secure random secret locally and invoke the backend service to compute the corresponding BN254 commitment.
    \item \emph{Transaction construction:} The time taken to fetch suggested transaction parameters from the Algorand node, construct the application call (carrying the commitment box reference) alongside the payment transaction, and group them atomically.
    \item \emph{Transaction submission:} The latency of signing the transaction group via the wallet provider and submitting the raw payload to the network.
    \item \emph{Network confirmation:} The network-level latency incurred waiting for the block to be proposed and the transaction to be confirmed on-chain.
\end{itemize}

For the \textbf{withdrawal flow} (Figure~\ref{fig:withdrawal_breakdown}), the latency incorporates the computationally intensive cryptographic generation:
\begin{itemize}
    \item \emph{Withdrawal data generation:} The time taken to retrieve contract state (reading commitment boxes), query the Indexer for recent deposits, select decoys, and construct the cryptographically shuffled anonymity set (the ring).
    \item \emph{Proof generation:} The exact duration the local backend enclave requires to compute the key image and generate the zero-knowledge LSAG ring signature proof using native BN254 curve operations. This phase exhibits the most significant $O(n)$ computational scaling.
    \item \emph{Transaction construction:} The time to decode the recipient address, fetch parameters, and build the withdrawal payload. For larger ring sizes, this includes the overhead of splitting box references across multiple grouped auxiliary transactions to bypass the AVM's consensus limits.
    \item \emph{Transaction submission:} The time to sign the withdrawal transaction (or transaction group) and submit it to the network.
    \item \emph{Network confirmation:} The latency from transaction broadcast to final on-chain confirmation.
\end{itemize}

\paragraph{Fixed Denominations and Amount Constraints.}
In the current testnet implementation, the transaction value is strictly fixed to 1 ALGO to ensure that all commitments within the anonymity set remain indistinguishable. While the system architecture could be extended in the future to support multiple parallel pools with different fixed denominations (e.g., separate 10 ALGO or 100 ALGO pools), arbitrary value selection within a single pool is not supported. Enabling flexible, hidden transaction amounts would require transitioning to a full Confidential Transaction (CT) scheme \cite{noether2016ring}. 
Such an extension would significantly increase the cryptographic overhead and verification complexity on the AVM. Consequently, users wishing to mix large volumes of funds must execute multiple discrete deposits and withdrawals, increasing their operational overhead and fee burden.

\paragraph{Payout Reductions.}
The withdrawal payout is not a 1:1 return of the deposited 1 ALGO. The smart contract deducts the network fees required for the inner \texttt{opup} burst, the inner payment transaction, and the Box Storage MBR. In the current reference implementation, the recipient receives $1{,}000{,}000 - 1{,}000 - 100{,}000 = 899{,}000$ micro-ALGO. Users must account for this $\sim$10\% protocol overhead when utilizing the protocol.

\section{Conclusion and Future Work}
\label{sec:concl}
We introduced Obscura, a decentralized privacy protocol that establishes transaction anonymity on highly constrained smart contract platforms. By instantiating Linkable Spontaneous Anonymous Group (LSAG) signatures over the BN254 curve, we demonstrated that signer ambiguity and double-spend resistance can be enforced entirely on-chain without relying on trusted setups or native pairing operations. Our approach circumvents traditional virtual machine bottlenecks by replacing global Merkle accumulators with an $O(1)$ Box Storage state model, and it overcomes strict execution limits via dynamic opcode pooling. Obscura bridges the gap between high-throughput ledger architectures and the cryptographic requirements of non-custodial privacy pools, demonstrating that default transaction anonymity and user-controlled regulatory compliance can coexist.

While the protocol provides a functional and implementation-ready privacy layer, its reliance on $O(n)$ elliptic-curve verification inherently bounds the practical anonymity set size. We identify several critical directions for future research:

\begin{itemize}
    \item \textbf{Efficient Succinct Verification Without Pairings:} To scale anonymity sets, future work must explore zero-knowledge proof systems that offer sublinear verification. While the AVM provides support for native elliptic-curve pairings, their high computational cost in terms of opcode consumption, combined with state access contention arising from Merkle accumulator updates, makes the efficient deployment of traditional pairing-based zk-SNARK mixers challenging to scale efficiently. Adapting inner-product arguments (e.g., Bulletproofs \cite{bunz2018bulletproofs}) or hash-based STARKs \cite{ben2018scalable} for the specific opcode constraints of the AVM would significantly enhance combinatorial hiding without relying on trusted setups.
    
    \item \textbf{Variable Denominations:} The current protocol enforces a fixed denomination to maintain commitment indistinguishability. Extending the system to support arbitrary transaction amounts requires integrating Confidential Transactions \cite{maxwell2015confidential,noether2016ring}. Research is needed to efficiently implement Pedersen commitments \cite{pedersen1991non} and zero-knowledge range proofs \cite{bunz2018bulletproofs} within strict smart contract execution budgets.
    \item \textbf{Post-Quantum Resilience:} Because Obscura's soundness and privacy rely fundamentally on the hardness of the Elliptic Curve Discrete Logarithm Problem (ECDLP), it is vulnerable to future quantum adversaries. Transitioning the on-chain verification logic to support post-quantum ring signatures, such as lattice-based constructions \cite{esgin2019matrict,lu2019raptor}, remains a significant open challenge for smart contract privacy.
    \item \textbf{Formal Verification:} Finally, while the cryptographic primitives are well-studied, their instantiation in PyTeal and subsequent compilation to AVM bytecode may introduce potential implementation vulnerabilities. Formally verifying the correctness of the compiled bytecode and the underlying cryptographic algebra is essential to provide high-assurance security guarantees for the on-chain verifier.
\end{itemize}

\section*{Code Availability}
To facilitate reproducibility and public auditing, the complete source code supporting this work is available at \url{https://github.com/n-azimi/Obscura}. 

The repository provides the full end-to-end implementation of the Obscura protocol, including the PyTeal smart contract and associated verification script for on-chain execution, the Python-based LSAG cryptographic engine and local prover API, and the React client application responsible for decoy selection and transaction construction. Furthermore, the repository includes contract deployment scripts and analytical tooling (Obscura Inspector and Obscura Lens) for evaluating transaction topology and ledger state.

\appendix
\renewcommand{\theHsection}{appendix.\Alph{section}}
\section{Formal Security Model and Proofs}
\label{app:formal}

This appendix provides a formal cryptographic treatment of the Obscura protocol, adapting the security model of Linkable Spontaneous Anonymous Group (LSAG) signatures \cite{liu2004linkable} to our construction over the BN254 elliptic curve. We fix notation and computational assumptions, formalize the scheme syntax and its correctness, analyze the concrete challenge-hash instantiation, and reduce each claimed security property to a well-studied hard problem in the random oracle model (ROM) \cite{bellare1993random}.

\subsection{Notation}
Table~\ref{tab:notation} summarizes the notation used throughout this appendix. We note one convention inherited from the main text: the unsubscripted symbol $R$ always denotes the ring of public commitments, whereas the subscripted symbols $L_i, R_i \in \mathbb{G}$ denote the verification points of the $i$-th ring member.

\begin{table}[htbp]
\centering
\caption{Summary of Notation}
\label{tab:notation}
\begin{tabularx}{\textwidth}{@{}l X@{}}
\toprule
\textbf{Symbol} & \textbf{Description} \\
\midrule
$\lambda$ & Security parameter \\
$\mathbb{G}$, $q$ & Prime-order subgroup $\mathbb{G}_1$ of BN254 and its order, $q \approx 2^{254}$ \\
$G$, $H$ & Independent generators of $\mathbb{G}$ with $\log_G H$ unknown \\
$x$, $P = xG$ & Secret scalar and public commitment \\
$I = xH$ & Key image (nullifier) \\
$R$, $n$, $\pi$ & Ring of public commitments, ring size, secret signer index \\
$m$ & Signed message (32-byte recipient public key) \\
$\sigma = (c_0, s_0, \dots, s_{n-1})$ & LSAG signature \\
$\mathcal{H}$ & Challenge hash $\{0,1\}^* \to \mathcal{D}$, modeled as a random oracle \\
$\mathcal{D}$ & Challenge domain $[0, 2^{255})$ induced by the masked digest \\
$\mathcal{B}_C$, $\mathcal{B}_N$ & Commitment and nullifier boxes in on-chain storage \\
$q_H$, $q_S$ & Number of adversarial random-oracle and signing-oracle queries \\
\bottomrule
\end{tabularx}
\end{table}

\subsection{Computational Assumptions}

\begin{definition}[Elliptic Curve Discrete Logarithm Problem (ECDLP)]
Given a generator $G \in \mathbb{G}$ and a point $P \in \mathbb{G}$, it is computationally infeasible for any probabilistic polynomial-time (PPT) adversary $\mathcal{A}$ to output $x \in \mathbb{Z}_q$ such that $P = xG$; the success probability of any PPT $\mathcal{A}$ is negligible in $\lambda$.
\end{definition}

\begin{definition}[Decisional Diffie--Hellman (DDH) Assumption \cite{boneh1998decision}]
Given a tuple $(G, aG, bG, Z) \in \mathbb{G}^4$ for uniformly sampled $a, b \xleftarrow{\$} \mathbb{Z}_q$, it is computationally infeasible for any PPT adversary to distinguish whether $Z = abG$ or $Z = cG$ for a uniformly sampled $c \xleftarrow{\$} \mathbb{Z}_q$.
\end{definition}

Because BN254 is a pairing-friendly curve, the hardness of DDH in $\mathbb{G} = \mathbb{G}_1$ is precisely the \emph{external Diffie--Hellman} (XDH) assumption \cite{ballard2005correlation,boneh2004short}: DDH is conjectured hard in $\mathbb{G}_1$ of a Type-3 pairing configuration because no efficiently computable homomorphism from $\mathbb{G}_1$ to $\mathbb{G}_2$ is known. The self-pairing test that renders DDH easy in symmetric pairing groups is unavailable here, since the pairing requires one argument from each of $\mathbb{G}_1$ and $\mathbb{G}_2$, while all Obscura protocol elements lie in $\mathbb{G}_1$.

\begin{assumption}[Generator Independence]
\label{asm:generators}
The secondary generator $H$ is derived by applying a hash-to-curve map \cite{faz2023rfc} to a fixed, public domain-separation string, and this map is modeled as a random oracle into $\mathbb{G}$. Consequently, (i) no party knows $\log_G H$, and (ii) security reductions in the ROM may program the map, i.e., substitute a challenge point for $H$.
\end{assumption}

Assumption~\ref{asm:generators} is necessary, not merely convenient: if some party knew $k = \log_G H$, it could compute $I = kP$ for every on-chain commitment $P$ without knowledge of any secret scalar, linking all deposits to their nullifiers and collapsing the anonymity of the entire system. The nothing-up-my-sleeve derivation renders the assumption publicly verifiable.

\paragraph{Security Parameter Selection.}
The BN254 group order satisfies $q \approx 2^{254}$, so generic discrete-logarithm algorithms (e.g., Pollard's rho \cite{pollard1978monte}) require approximately $\sqrt{q} \approx 2^{127}$ group operations, providing a 127-bit security level for the ECDLP and XDH instances used above. We note that tower-number-field-sieve advances \cite{kim2016extended,barbulescu2019updating} degrade the security of the \emph{pairing} groups of BN254 (i.e., the DLP in the embedding field $\mathbb{F}_{p^{12}}$) to roughly 100 bits; Obscura, however, performs no pairing computations and relies solely on the elliptic-curve group law in $\mathbb{G}_1$, whose 127-bit security level is unaffected.

\subsection{Scheme Syntax and Correctness}

\begin{definition}[Linkable Ring Signature Scheme]
An LSAG signature scheme is a triple of PPT algorithms $(\textsf{KeyGen}, \textsf{Sign}, \textsf{Verify})$:
\begin{itemize}
    \item $(x, P) \leftarrow \textsf{KeyGen}(1^\lambda)$: outputs a secret scalar $x \xleftarrow{\$} \mathbb{Z}_q$ and the public commitment $P = xG$;
    \item $(\sigma, I) \leftarrow \textsf{Sign}(x, R, m)$: on input a secret key $x$, a ring $R$ containing $P = xG$ at a secret index $\pi$, and a message $m$, outputs a signature $\sigma = (c_0, s_0, \dots, s_{n-1})$ and the key image $I = xH$;
    \item $\{0,1\} \leftarrow \textsf{Verify}(R, m, I, \sigma)$: outputs $1$ (accept) or $0$ (reject).
\end{itemize}
The scheme is \emph{correct} if, for every message $m$, every honestly generated key pair $(x, P)$, and every ring $R$ containing $P$, $\textsf{Verify}(R, m, I, \sigma) = 1$ whenever $(\sigma, I) \leftarrow \textsf{Sign}(x, R, m)$.
\end{definition}

\begin{theorem}[Correctness]
\label{thm:correctness}
The Obscura instantiation of \textsf{Sign} and \textsf{Verify} (Section~\ref{sec:crypto}) is correct: an honestly generated signature is always accepted.
\end{theorem}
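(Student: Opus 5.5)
The plan is a direct induction along the ring. I will show that the verifier, starting from the published $c_0$, recomputes exactly the challenge sequence $c_1, \dots, c_{n-1}, c_n$ that the prover computed during \textsf{Sign}, and that $c_n$ coincides with $c_0$. First I would fix the prover's transcript explicitly, with indices read modulo $n$. The prover sets $c_{\pi+1} = \mathcal{H}(m \| \alpha G \| \alpha H)$. For every decoy index $i \neq \pi$ it sets $c_{i+1} = \mathcal{H}(m \| s_i G + c_i P_i \| s_i H + c_i I)$. The value $c_0$ placed in $\sigma$ is whichever element of this cyclic chain carries index $0$; if $\pi = n-1$, it is produced directly by the initialization step. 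Every $c_i$ is thus a well-defined element of the challenge domain $\mathcal{D}$, and the chain closes up cyclically by construction.

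The core step is a per-index lemma. If the verifier's running challenge on entering iteration $i$ equals the prover's $c_i$, then the verifier's $L_i, R_i$ equal the prover's points, and hence its updated challenge equals $c_{i+1}$. For $i \neq \pi$ this holds syntactically, since both parties evaluate the same expression $s_i G + c_i P_i$ and $s_i H + c_i I$. For $i = \pi$ I use $P_\pi = xG$, $I = xH$ and $s_\pi = \alpha - c_\pi x \bmod q$. Since $\mathbb{G}$ has order $q$, scalars act modulo $q$, so $s_\pi G + c_\pi P_\pi = (\alpha - c_\pi x + c_\pi x)G = \alpha G$, and likewise $R_\pi = \alpha H$. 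This reproduces the prover's initialization points, so the verifier obtains $c_{\pi+1}$. Induction from $i = 0$, where the base case holds trivially because the verifier is handed $c_0$, then gives a final challenge equal to $c_n = c_0$, and \textsf{Verify} outputs $1$. The case $n = 1$ is the degenerate instance $\pi = 0$ and needs no separate treatment.

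The main obstacle is not the algebra but checking that ``the same value'' really is the same at the level the contract compares.
\begin{itemize}
\item \emph{Scalars versus challenges.} The challenges are 255-bit masked digests, not reduced residues. I would argue that the curve arithmetic uses each challenge only through its class modulo $q$, so the cancellation above is valid even when $c_\pi \ge q$.
\item \emph{Closure comparison.} The final test compares the raw 32-byte strings. Here $c_0$ is literally the prover's masked digest, and the verifier's final value is the same deterministic function $\mathcal{H}$ applied to identical inputs, so the strings agree byte for byte.
\item \emph{Serialization.} The prover and the contract must hash the same 64-byte encodings of $L_i, R_i$ and the same 32-byte $m$. I would state this as a requirement on the implementation: a canonical encoding shared by the Python prover and the AVM opcodes.
\item \emph{Identity point.} If some intermediate point is the identity, the encoding of the point at infinity must also be consistent between prover and contract. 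Otherwise this is an event of probability $O(n/q)$ over the choice of $\alpha$ and the $s_i$, which I would set aside, so correctness would hold with probability $1 - \mathrm{negl}(\lambda)$.
\end{itemize}
Once these conventions are fixed, the induction goes through unconditionally.
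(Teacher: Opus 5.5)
Your proof is correct and follows the paper's own argument: at decoy indices the verifier recomputes exactly the prover's $L_i, R_i$; at the true index, $s_\pi = \alpha - c_\pi x$ gives back $\alpha G$ and $\alpha H$; so the verifier's challenge chain coincides with the prover's and closes at $c_0$. Your extra remarks on reducing the 255-bit challenges modulo $q$ and on the byte-level closure comparison match the paper's observation in its appendix discussion of the Fiat--Shamir challenge distribution, while your serialization and identity-point caveats are implementation conditions that the paper leaves implicit.
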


\begin{proof}
Let $\pi$ be the true signer's index. During signing, the prover sets $L_\pi = \alpha G$, $R_\pi = \alpha H$, and $c_{\pi+1} = \mathcal{H}(m \| L_\pi \| R_\pi)$; for each decoy index $i \neq \pi$ it samples $s_i$ and computes $L_i = s_i G + c_i P_i$, $R_i = s_i H + c_i I$, and $c_{i+1} = \mathcal{H}(m \| L_i \| R_i)$; finally it closes the ring with $s_\pi = \alpha - c_\pi x \pmod q$.

During verification, at every decoy index the verifier recomputes exactly the prover's points, since it uses the same $s_i$, $c_i$, $P_i$, and $I$. At the true index $\pi$ it computes
\begin{align*}
L_\pi &= s_\pi G + c_\pi P_\pi = (\alpha - c_\pi x)G + c_\pi (xG) = \alpha G, \\
R_\pi &= s_\pi H + c_\pi I = (\alpha - c_\pi x)H + c_\pi (xH) = \alpha H,
\end{align*}
which coincide with the prover's initialization. Hence every challenge in the verifier's chain equals the corresponding challenge in the prover's chain, and after the final iteration the running challenge equals $c_0$, so \textsf{Verify} accepts. \qed
\end{proof}

\subsection{Distribution of the Fiat--Shamir Challenges}
\label{app:challenge}
The protocol instantiates $\mathcal{H}(z) = \mathrm{SHA\text{-}256}(z) \wedge \mathrm{mask}$, where the mask clears the most significant bit of the digest. In the ROM, the output is therefore uniform over the domain $\mathcal{D} = [0, 2^{255})$. Since the BN254 group order satisfies $q < 2^{255} < 3q$, challenges are \emph{not} uniform modulo $q$; scalars $c \geq q$ are reduced implicitly by the \texttt{EcScalarMul} opcode. Correctness is unaffected, because the prover and the verifier compute identical masked digests and the final ring-closure check compares the 32-byte strings directly. The following lemma establishes that this instantiation also preserves every property required by the security proofs.

\begin{lemma}[Challenge Distribution]
\label{lem:challenge}
Model $\mathcal{H}$ as a random oracle with outputs uniform on $\mathcal{D} = [0, 2^{255})$. Then: (i) any challenge not yet queried is unpredictable except with probability $2^{-255}$; (ii) for any fixed $c \in \mathcal{D}$ and a fresh uniform $c' \in \mathcal{D}$, $\Pr[c' \equiv c \pmod q \,\wedge\, c' \neq c] \leq 3 \cdot 2^{-255}$, so the difference $c' - c$ is invertible modulo $q$ with overwhelming probability; and (iii) the joint distribution of the challenges is determined by the random oracle alone and is independent of the signer index.
\end{lemma}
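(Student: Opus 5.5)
I would prove the three parts separately, working in the ROM with $\mathcal{H}$ uniform on $\mathcal{D}=[0,2^{255})$ and using only two facts about the group order: $q$ is prime and $2^{254}<q<2^{255}<3q$.

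For part (i), fix the adversary's view. Suppose the point $z$ has not yet been queried to $\mathcal{H}$. Then $\mathcal{H}(z)$ is a fresh uniform sample from $\mathcal{D}$, independent of the view. So any guess $\hat c$, even one chosen adaptively, equals $\mathcal{H}(z)$ with probability exactly $|\mathcal{D}|^{-1}=2^{-255}$. The only point to make explicit is that the masking step is already absorbed into the ROM model: we model the masked output directly as uniform on $\mathcal{D}$. Strictly speaking, the unmasked SHA-256 digest is uniform on $\{0,1\}^{256}$, and clearing the top bit maps it 2-to-1 onto $\mathcal{D}$, so the masked output is again uniform.

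For part (ii), fix $c\in\mathcal{D}$ and count the elements $c'\in\mathcal{D}$ with $c'\equiv c \pmod q$. These are the values $c+kq$ lying in $[0,2^{255})$. Since $2^{255}<3q$, the interval has length less than $3q$, so it contains at most three such values, one of which is $c$ itself. Hence at most two values satisfy $c'\neq c$, and the probability is at most $2\cdot 2^{-255}\le 3\cdot 2^{-255}$.

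Invertibility follows directly. Because $q$ is prime, $c'-c$ is invertible modulo $q$ exactly when $c'\not\equiv c \pmod q$. That condition fails only when $c'=c$, which has probability $2^{-255}$, or in the bad event just bounded. The total failure probability is therefore at most $3\cdot 2^{-255}$, which is negligible. I would also note the intended use: a forking or rewinding reduction extracts $x=(s-s')/(c'-c)$, and union-bounding the above over $q_H$ forks costs only $O(q_H 2^{-255})$.

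Part (iii) is the subtlest, since it is a distributional claim about honestly generated signatures. I would argue it by a simulation or hybrid argument. In an honest signature, each challenge $c_{i+1}=\mathcal{H}(m\|L_i\|R_i)$ is the oracle's answer at a point. At the signer index $\pi$, the query point $(L_\pi,R_\pi)=(\alpha G,\alpha H)$ is uniformly random because $\alpha$ is fresh. At each decoy index, $L_i=s_iG+c_iP_i$ is uniform because $s_i$ is fresh. So with overwhelming probability every query point is fresh, and each challenge is a uniform, independent oracle output. The same holds with $\pi$ replaced by any other index $\pi'$. In both cases the challenge tuple $(c_0,\dots,c_{n-1})$ is uniform on $\mathcal{D}^n$ up to collision terms of order $n^2/q$ and $q_H n/q$, so its distribution carries no information about $\pi$.

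The main obstacle is stating (iii) precisely. Formally, I would claim that the marginal distribution of the challenges is statistically independent of $\pi$. I would defer the joint independence of challenges, responses and key image to the DDH-based anonymity theorem, rather than overclaim it here.
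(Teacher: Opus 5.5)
Your proposal is correct and follows the paper's proof essentially step for step: uniformity of fresh oracle outputs for (i), counting the at most $\lceil 2^{255}/q\rceil = 3$ elements of $\mathcal{D}$ in a residue class for (ii), and freshness of every oracle query for (iii). Your explicit argument for (iii) is more careful than the paper's one-line justification: you show the query points are unpredictable because $\alpha$ and the $s_i$ are fresh, so independence from $\pi$ holds up to negligible collision terms.

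There is one harmless slip. The BN254 group order actually satisfies $2^{253} < q < 2^{254}$, so your claim $2^{254} < q$ is false. However, you only ever use primality and $2^{255} < 3q$, which does hold, so the argument is unaffected.
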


\begin{proof}
(i) follows directly from the uniformity of the oracle output on $\mathcal{D}$. For (ii), the number of elements of $\mathcal{D}$ congruent to $c$ modulo $q$ is at most $\lceil |\mathcal{D}|/q \rceil = 3$, so a fresh uniform $c'$ collides with the residue class of $c$ with probability at most $3/2^{255}$; whenever $c' \not\equiv c \pmod q$, the difference is a nonzero element of $\mathbb{Z}_q$ and hence invertible, as $q$ is prime. For (iii), every challenge is the image of a distinct oracle query, and the oracle's answers are sampled independently of the signer index. \qed
\end{proof}

Property (i) guarantees that a forger cannot anticipate the ring-closure condition; property (ii) guarantees that the extraction steps of Theorems~\ref{thm:unforgeability} and~\ref{thm:linkability}, which divide by challenge differences modulo $q$, succeed with overwhelming probability; property (iii) underpins the signer-ambiguity argument of Theorem~\ref{thm:anonymity}.

\subsection{Formal Security Model}

The security of the Obscura protocol relies on three properties: existential unforgeability, signer ambiguity, and linkability. In all experiments the adversary $\mathcal{A}$ is a PPT algorithm with access to the random oracle $\mathcal{H}$.

\begin{definition}[Existential Unforgeability against Adaptive Chosen-Message and Chosen-Public-Key Attacks]
\label{def:unforgeability}
Let $\mathcal{SO}(R, m)$ be a signing oracle that accepts any public key list $R$ and any message $m$, and produces a valid signature $\sigma$. An LSAG signature scheme is existentially unforgeable if, for any PPT adversary $\mathcal{A}$ with access to $\mathcal{SO}$ and the random oracle $\mathcal{H}$, the probability that $\mathcal{A}$ outputs a valid forgery $(m^*, R^*, \sigma^*)$ is negligible. The forgery must satisfy $\textsf{Verify}(R^*, m^*, I^*, \sigma^*) = 1$, and $(m^*, R^*)$ must not correspond to any query--response pair of $\mathcal{SO}$. The interaction between the challenger and the adversary for this security experiment is illustrated in Figure~\ref{fig:euf_cma_exp}.
\end{definition}

\begin{figure}[htbp]
    \centering
    \begin{tikzpicture}[>=stealth]
        \draw (0,0) rectangle (3.5,4);
        \node[align=center] at (1.75, 3.5) {Challenger};
        \node[align=left] at (1.75, 1.5) {
            Setup parameters \\[0.2cm]
            Simulate $\mathcal{SO}, \mathcal{H}$
        };

        \draw (7,0) rectangle (9.5,4);
        \node at (7.3, 3.7) {$\mathcal{A}$};

        \draw[<-] (3.5, 3.0) -- node[above] {Oracle Queries} (7, 3.0);
        \draw[->] (3.5, 2.0) -- node[above] {Oracle Responses} (7, 2.0);
        \draw[<-] (3.5, 0.8) -- node[above] {Forgery $(m^*, R^*, \sigma^*)$} (7, 0.8);
    \end{tikzpicture}
    \caption{Existential Unforgeability (EUF-CMA) Security Experiment.}
    \label{fig:euf_cma_exp}
\end{figure}

\begin{definition}[Signer Ambiguity]
\label{def:ambiguity}
Consider the following experiment, illustrated in Figure~\ref{fig:signer_ambiguity_exp}. The adversary $\mathcal{A}$ selects a message $m$, a ring $R$ of $n$ honestly generated public keys, and a set $D_t$ of $t$ corrupted secret keys ($0 \leq t < n-1$) whose corresponding public keys lie in $R$. The challenger samples the signer index $\pi$ uniformly from the non-corrupted indices, computes $(\sigma, I) \leftarrow \textsf{Sign}(x_\pi, R, m)$, and returns $(\sigma, I)$ to $\mathcal{A}$, who outputs a guess $\hat{\pi}$. An LSAG signature scheme is \emph{signer ambiguous} if for every PPT adversary $\mathcal{A}$,
\[
\Pr[\hat{\pi} = \pi] \leq \frac{1}{n-t} + \epsilon(\lambda)
\]
for some negligible function $\epsilon$.
\end{definition}

\begin{figure}[htbp]
    \centering
    \begin{tikzpicture}[>=stealth]
        \draw (0,0) rectangle (4.2,4);
        \node[align=center] at (2.1, 3.5) {Challenger};
        \node[align=left] at (2.1, 1.5) {
            $\pi \xleftarrow{\$} \{i : x_i \notin D_t\}$ \\[0.2cm]
            $(\sigma, I) \leftarrow \textsf{Sign}(x_\pi, R, m)$
        };

        \draw (7,0) rectangle (9.5,4);
        \node at (7.3, 3.7) {$\mathcal{A}$};

        \draw[<-] (4.2, 3.0) -- node[above] {$m, R, D_t$} (7, 3.0);
        \draw[->] (4.2, 2.0) -- node[above] {$\sigma, I$} (7, 2.0);
        \draw[<-] (4.2, 0.8) -- node[above] {$\hat{\pi} \in \{0, \dots, n-1\}$} (7, 0.8);
    \end{tikzpicture}
    \caption{Signer Ambiguity Security Experiment.}
    \label{fig:signer_ambiguity_exp}
\end{figure}

\begin{definition}[Linkability]
\label{def:linkability}
An LSAG signature scheme is \emph{linkable} if there exists a PPT algorithm \textsf{Link} that, given two valid signature tuples $(R, m_1, I_1, \sigma_1)$ and $(R, m_2, I_2, \sigma_2)$, outputs $1$ if and only if the two signatures were generated using the same secret key, except with negligible probability. In Obscura, $\textsf{Link}$ outputs $1$ if and only if $I_1 = I_2$. The property requires that no PPT adversary in possession of at most one secret key corresponding to the ring $R$ can output two valid signatures with distinct key images $I_1 \neq I_2$, nor a valid signature whose key image differs from $xH$ for the secret key $x$ it used.
\end{definition}

\subsection{Proof of Existential Unforgeability}

\begin{theorem}
\label{thm:unforgeability}
The Obscura LSAG signature scheme is existentially unforgeable against adaptive chosen-message and adaptive chosen-public-key attacks in the random oracle model, provided the ECDLP is hard in $\mathbb{G}$.
\end{theorem}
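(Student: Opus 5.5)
The proof is a reduction from a forger $\mathcal{A}$ to an ECDLP solver $\mathcal{B}$, using the standard rewinding (forking-lemma) technique of Liu--Wei--Wong adapted to our ring structure. $\mathcal{B}$ receives an ECDLP instance $(G, Y)$. It sets up honest-looking public keys by embedding $Y$ into the key list: it samples a polynomial bound $N$ on the number of keys, picks a random index $j^\star \in [N]$, and sets $P_{j^\star} = Y$. It generates the other keys honestly as $P_j = x_j G$. Using Assumption~\ref{asm:generators}, $\mathcal{B}$ programs $H = hG$ for a known random $h$, so that $\mathcal{B}$ can compute key images $I_j = hP_j$ for every key, including $Y$.

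Signing queries $\mathcal{SO}(R, m)$ are answered by the standard honest-verifier zero-knowledge simulation. $\mathcal{B}$ picks a random start index $k$, samples $c_k \in \mathcal{D}$ and all responses $s_0,\dots,s_{n-1}$, and walks the ring computing $L_i = s_i G + c_i P_i$ and $R_i = s_i H + c_i I$. It programs $\mathcal{H}(m\|L_i\|R_i) := c_{i+1}$, with $c_{i+1}$ freshly sampled except that the last programming closes the loop to $c_k$. Programming fails only if some point $(m\|L_i\|R_i)$ was already queried. Since $L_i$ is uniform in $\mathbb{G}$, this happens with probability at most $q_H/q$ per point, giving total failure probability $O(n q_S (q_H+q_S)/q)$. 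By Lemma~\ref{lem:challenge}(i),(iii), the simulated transcripts are identically distributed to real ones.

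Next comes the ring-closure argument. In a valid forgery $(m^\star, R^\star, I^\star, \sigma^\star)$ not produced by $\mathcal{SO}$, all $n$ hash queries $(m^\star\|L_i\|R_i)$ of the verification chain must have been made; an unqueried one would be predicted with probability at most $2^{-255}$ by Lemma~\ref{lem:challenge}(i). Because the chain is cyclic, some query $(m^\star\|L_{\ell-1}\|R_{\ell-1})$ is chronologically the last one to be asked among the chain queries. Its answer is $c_\ell$, and at that moment $L_\ell, R_\ell$ were already fixed. This determines a distinguished index $\ell$, and $\mathcal{B}$ guesses the corresponding query number among $q_H$.

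We then fork. $\mathcal{B}$ rewinds $\mathcal{A}$ to that query and answers it with a fresh $c'_\ell$. By the general forking lemma, with probability roughly $\epsilon^2/q_H$ the adversary again forges with the same $L_\ell$ and $R_\ell$ at the same index. This yields $s_\ell G + c_\ell P_\ell = s'_\ell G + c'_\ell P_\ell$. By Lemma~\ref{lem:challenge}(ii), $c_\ell \not\equiv c'_\ell \pmod q$ except with probability $3\cdot 2^{-255}$, so we can extract $x_\ell = (s'_\ell - s_\ell)(c_\ell - c'_\ell)^{-1}$.

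It remains to show that extraction hits the embedded instance with non-negligible probability. We must argue that $P_\ell$ is a key whose discrete log $\mathcal{B}$ does not already know. In the chosen-public-key setting, the forgery ring may contain adversarially chosen keys. We handle this as in Liu et al.: unforgeability is meaningful only for rings of honestly generated keys, and on-chain this is enforced by the $\mathcal{B}_C$ membership check. Since $j^\star$ is hidden from $\mathcal{A}$, we have $P_\ell = Y$ with probability at least $1/N$, and $\mathcal{B}$ outputs $\log_G Y$ with probability about $\epsilon^2/(q_H N)$ minus negligible terms.

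I expect the main obstacle to be the forking step. One must argue carefully that rewinding preserves the index $\ell$ and the points $L_\ell$, despite the cyclic, order-dependent chain and the interleaved programmed answers to simulated signing queries. The first approach to try is to apply the general forking lemma of Bellare--Neven directly, taking the output index to be the query number of the last chain query.
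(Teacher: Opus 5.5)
Your proposal is essentially the paper's proof: hide the ECDLP instance among $N$ honest keys, answer $\mathcal{SO}$ by programming $\mathcal{H}$, fork at a gap query of the forgery's challenge chain, extract $\log_G P_\ell$ from the two $L$-equations, and pay a factor $1/N$ for hitting the embedded key (you rightly flag that the rewound forgery must reuse the same $L_\ell$ and $P_\ell$, which the paper merely asserts; neither argument rules out chain entries programmed by the $\mathcal{SO}$ simulation, which is a real concern here because $\mathcal{H}(m\|L_i\|R_i)$ does not hash the ring, so a cyclic rotation of an $\mathcal{SO}$ answer already verifies on a different ordered ring). Two of your local choices are sharper than the paper's: programming $H = hG$ with known $h$ makes every simulated key image $I_j = hP_j$ correctly formed, whereas the paper's independent scalars $r_i$ with $I = r_\pi H$ yield signatures that are only DDH-indistinguishable from real ones (and DDH is not a hypothesis of the theorem), and forking at the chronologically last chain query ensures $L_\ell, R_\ell$ are fixed before the fork, which the paper's ``earliest query'' description does not literally deliver.
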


\begin{proof}
We construct a PPT reduction $\mathcal{M}$ that uses a successful forging adversary $\mathcal{A}$ to solve the ECDLP. Let $P^*$ be the ECDLP challenge point; $\mathcal{M}$ aims to find $x^*$ such that $P^* = x^*G$. $\mathcal{M}$ generates $N$ key pairs honestly, replaces the public key at a uniformly random position with $P^*$, publishes the resulting key list (all keys are identically distributed, so the embedding position is information-theoretically hidden from $\mathcal{A}$), and answers oracle queries as follows.

\textbf{Simulation of $\mathcal{H}$:} $\mathcal{M}$ answers by lazy sampling: for any new query it returns a uniform value from $\mathcal{D}$ and records the pair to maintain consistency across duplicated queries.

\textbf{Simulation of $\mathcal{SO}$:} When $\mathcal{A}$ requests a signature on message $m$ for a ring $R = \{P_0, \dots, P_{n-1}\}$, $\mathcal{M}$ simulates the signature without knowing any secret key by back-patching the random oracle. $\mathcal{M}$ selects an index $\pi \xleftarrow{\$} \{0, \dots, n-1\}$ and samples $c_0, s_0, \dots, s_{n-1}$ uniformly. To keep key images consistent with a real signer, $\mathcal{M}$ maintains a table assigning to each public key $P_i$ a fixed scalar $r_i \xleftarrow{\$} \mathbb{Z}_q$ and reuses the key image $I = r_\pi H$ across all queries simulated for $P_\pi$. For each $i \neq \pi$, $\mathcal{M}$ computes $L_i = s_i G + c_i P_i$ and $R_i = s_i H + c_i I$, obtaining $c_{i+1} = \mathcal{H}(m \| L_i \| R_i)$ from the simulated oracle. For the index $\pi$, $\mathcal{M}$ computes $L_\pi = s_\pi G + c_\pi P_\pi$ and $R_\pi = s_\pi H + c_\pi I$, and \emph{programs} the oracle so that $\mathcal{H}(m \| L_\pi \| R_\pi) = c_{\pi+1}$ (interpreting the index modulo $n$). Programming fails only if this input was previously queried; since $s_\pi$ is uniform, $L_\pi$ is uniform in $\mathbb{G}$, so a collision with any of the at most $q_H + nq_S$ recorded queries occurs with probability at most $(q_H + nq_S)/q$, which is negligible. The resulting transcript is distributed identically to a real signature.

\textbf{Extraction via Rewinding:} Suppose $\mathcal{A}$ outputs a valid forgery $(m, R, \sigma)$ with $\sigma = (c_0, s_0, \dots, s_{n-1})$. Except with negligible probability, all $n$ oracle queries used in verification appear among the recorded queries (otherwise $\mathcal{A}$ must have guessed an unqueried oracle output, which by Lemma~\ref{lem:challenge}(i) succeeds with probability at most $2^{-255}$). Let the \emph{gap index} $\pi$ denote the ring position whose challenge corresponds to the earliest of these queries. By the Forking Lemma \cite{pointcheval2000security}, $\mathcal{M}$ rewinds $\mathcal{A}$ to that critical query and supplies fresh oracle responses; with non-negligible probability $\mathcal{A}$ outputs a second valid forgery $\sigma' = (c'_0, s'_0, \dots, s'_{n-1})$ for the same $m$ and $R$, agreeing with the first transcript up to the fork, with $c'_\pi \neq c_\pi$.

Both transcripts satisfy the verification equations at the gap index with the same points $L_\pi, R_\pi$ (they are determined before the fork):
\begin{align*}
L_\pi &= s_\pi G + c_\pi P_\pi = s'_\pi G + c'_\pi P_\pi, \\
R_\pi &= s_\pi H + c_\pi I = s'_\pi H + c'_\pi I.
\end{align*}
By Lemma~\ref{lem:challenge}(ii), $(c'_\pi - c_\pi)$ is invertible modulo $q$ with overwhelming probability. Setting $x_\pi = (s_\pi - s'_\pi)(c'_\pi - c_\pi)^{-1} \bmod q$, the first equation yields $P_\pi = x_\pi G$, and the second yields $I = x_\pi H$: the extracted witness simultaneously opens the ring member \emph{and} the key image (this second identity is the basis of Theorem~\ref{thm:linkability}).

Since the forgery involves no corrupted keys and the embedding position of $P^*$ is uniform and independent of $\mathcal{A}$'s view, $P_\pi = P^*$ holds with probability at least $1/N$, in which case $\mathcal{M}$ outputs $x^* = x_\pi$ and solves the ECDLP. As the ECDLP is hard in $\mathbb{G}$, the forgery probability of $\mathcal{A}$ must be negligible. \qed
\end{proof}

\begin{corollary}[Concrete Security]
\label{cor:concrete}
Suppose $\mathcal{A}$ makes at most $q_H$ random-oracle queries and $q_S$ signing-oracle queries, and outputs a valid forgery with probability at least $1/Q(\lambda)$ for some polynomial $Q$. Then there exists an algorithm that solves the ECDLP in $\mathbb{G}$ with probability at least $\left(\frac{1}{n(q_H + nq_S)\,Q(\lambda)}\right)^{2}$ and expected running time at most twice that of $\mathcal{A}$, following the rewind-on-success analysis of \cite{liu2004linkable}.
\end{corollary}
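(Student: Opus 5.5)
The plan is to make the reduction of Theorem~\ref{thm:unforgeability} quantitative by tracking the success probability of each stage: simulation, the guess of the critical query, the fork, and extraction. The template is the general Forking Lemma in its ``rewind-on-success'' form from \cite{liu2004linkable}.

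First I fix a single-run algorithm $\mathcal{B}$ that wraps $\mathcal{A}$. It simulates $\mathcal{H}$ by lazy sampling and $\mathcal{SO}$ by back-patching, exactly as in the proof of Theorem~\ref{thm:unforgeability}. The simulation differs from the real experiment only when a programming collision occurs, which happens with probability at most $q_S(q_H+nq_S)/q$; this is negligible and can be absorbed into $1/Q(\lambda)$ for large $\lambda$, say halving it. The number of distinct oracle points that $\mathcal{B}$ can ever answer is at most $q_H + nq_S$. I index these answers $1,\dots,q_H+nq_S$, so that a successful forgery determines a critical index $j$ (the earliest of its $n$ verification queries) together with a gap position $\pi\in\{0,\dots,n-1\}$. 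By Lemma~\ref{lem:challenge}(i), forgeries that use an unqueried challenge contribute at most $2^{-255}$.

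Second, I apply the forking argument. Write $\varepsilon_{j,\pi}$ for the probability that $\mathcal{B}$ succeeds with critical pair $(j,\pi)$, so that $\sum_{j,\pi}\varepsilon_{j,\pi}\ge 1/Q$. The reduction runs $\mathcal{A}$ once. If it succeeds with pair $(j,\pi)$, the reduction rewinds to query $j$ with fresh answers from that point on. The standard splitting or Cauchy--Schwarz (Jensen) step then shows that the second run succeeds with the same pair with probability at least $\varepsilon_{j,\pi}$ in the averaged sense. The total fork probability is therefore at least $\sum_{j,\pi}\varepsilon_{j,\pi}^2 \ge \bigl(\sum\varepsilon_{j,\pi}\bigr)^2/(n(q_H+nq_S))$. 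This already exceeds the claimed $\bigl(1/(n(q_H+nq_S)Q)\bigr)^2$, so I would state the weaker, cleaner bound and simply note that it is implied. From this I subtract the collision term $3\cdot 2^{-255}$ from Lemma~\ref{lem:challenge}(ii), which is needed to guarantee that $c'_\pi - c_\pi$ is invertible; for large $\lambda$ this is again absorbed. The extraction formula from Theorem~\ref{thm:unforgeability} then yields $x_\pi$.

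Third, the factor for the embedding position must be accounted for. Either I absorb the factor $1/N$ by embedding $P^*$ into every honest key through random re-randomization ($P_i = P^* + r_iG$), so that any extracted $x_\pi$ gives $x^* = x_\pi - r_\pi$. Or I restrict attention to rings of the forger's choice, where the stated bound silently assumes this. I plan the re-randomization route, because it removes $N$ and matches the corollary's form. The simulation of $\mathcal{SO}$ is unaffected, since it never uses secret keys. For running time, the reduction performs at most two executions of $\mathcal{A}$ plus polynomial bookkeeping, which gives the ``at most twice'' claim when the second run is counted only on success.

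The main obstacle is the forking step. The critical query must be well-defined and identical across both runs, so that $L_\pi$ and $R_\pi$ coincide even though the signing-oracle simulation programs $\mathcal{H}$ at adaptively chosen points. My first approach is to treat programmed entries as ordinary indexed oracle answers that are re-sampled after the fork. I would then check that forgeries whose critical query is a programmed point are excluded, because in that case $(m,R)$ would match an $\mathcal{SO}$ query, which the definition rules out.
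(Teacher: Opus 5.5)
You follow the paper's route. The paper gives the corollary no proof of its own beyond invoking the rewind-on-success analysis of \cite{liu2004linkable} on the reduction of Theorem~\ref{thm:unforgeability}, and you make that analysis explicit. Two of your additions are real improvements:
\begin{itemize}
\item Your Jensen step, $\sum\varepsilon_{j,\pi}^2\ge(\sum\varepsilon_{j,\pi})^2/(n(q_H+nq_S))$, beats the stated bound.
\item Your re-randomized embedding repairs the fact that the corollary silently drops the $1/N$ loss of Theorem~\ref{thm:unforgeability}. It also covers a point you do not mention. Since neither the ring nor $I$ is hashed, a fork fixes only $m\|L_\pi\|R_\pi$, and the second forgery may put a different honest key at position $\pi$. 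With $P_i=P^*+r_iG$, the relation $s_\pi G+c_\pi(P^*+r_aG)=s'_\pi G+c'_\pi(P^*+r_bG)$ still gives $\log_G P^*$ when $c_\pi\not\equiv c'_\pi$. Use this relation, not the extraction formula of Theorem~\ref{thm:unforgeability}.
\end{itemize}

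The step that fails is your resolution of the main obstacle. You claim that a forgery touching a programmed entry must repeat some signing query's $(m,R)$. That is the argument of \cite{liu2004linkable}, and it works there only because every challenge hashes the ring and the key image. Obscura hashes only $m\|L_i\|R_i$.

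Concretely, take an oracle answer $(c_0,s_0,\ldots,s_{n-1})$ on $(m,(P_0,\ldots,P_{n-1}))$. The tuple $(c_1,s_1,\ldots,s_{n-1},s_0)$ verifies on the rotated ring $(P_1,\ldots,P_{n-1},P_0)$, and the analogous tuple verifies on the doubled ring. Such a forgery uses only entries created by $\mathcal{SO}$. Its earliest entry is the first ordinary query made inside that signing call, because the programmed entry is created last, so your exclusion does not even apply. Forking there leaves the challenge entering the gap position equal to the value $\mathcal{M}$ pre-sampled (and later programmed), so $c'_\pi=c_\pi$ and nothing is extracted. Re-sampling that value after the fork, as you propose, instead makes the simulated signature fail to verify.

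An adversary that always does this wins with probability $1$, so under the ordered-list reading of Definition~\ref{def:unforgeability} the corollary cannot hold. You must therefore do three things:
\begin{itemize}
\item Read $R^*$ as a set, and index each programmed value at the time it is sampled.
\item Prove that any valid chain using an $\mathcal{SO}$ entry is either a rotation or repetition of that signature, or reveals a discrete-log relation among honest keys. Trace backward: recomputing $L_k=s_kG+c_kP_k$ with $(c^*,P')\neq(c_k,P_k)$ gives $(s^*-s_k)G=c_kP_k-c^*P'$. If $c^*\neq c_k$, your embedding yields $x^*$.
\item Change the embedding. If $c^*=c_k$ and $P'\neq P_k$, your choice $P_i=P^*+r_iG$ turns the relation into $(r_k-r')G$, which you already know. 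Embed as $P_i=a_iP^*+r_iG$ with hidden random $a_i$ instead.
\end{itemize}

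Finally, your claim that the simulation differs only on programming collisions is false. Theorem~\ref{thm:unforgeability} answers with $I=r_\pi H$ independent of $P_\pi$, which is only DDH-indistinguishable from $x_\pi H$. Under your embedding you cannot compute $(x^*+r_i)H$ at all. Program $H=hG$ via Assumption~\ref{asm:generators}, return $I=hP_\pi$, and choose $\pi$ among the honest positions of the queried ring. Otherwise the bound acquires a DDH term that the ECDLP-only statement does not allow.
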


\subsection{Proof of Signer Ambiguity}

\begin{theorem}
\label{thm:anonymity}
Under Assumption~\ref{asm:generators}, the Obscura LSAG signature scheme is signer ambiguous in the random oracle model, provided the DDH (XDH) problem is hard in $\mathbb{G}$.
\end{theorem}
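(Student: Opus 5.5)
The plan is a short sequence of hybrid games, ending in a game where the challenger's output is independent of $\pi$. In that final game the adversary's success probability is exactly $1/(n-t)$, since $\pi$ is uniform over the $n-t$ non-corrupted indices. Each transition between games costs either a negligible random-oracle programming failure or a DDH distinguishing advantage.

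\textbf{Game 0} is the real experiment of Definition~\ref{def:ambiguity}.

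\textbf{Game 1} replaces the honest signature with a simulated one, as in the $\mathcal{SO}$ simulation in the proof of Theorem~\ref{thm:unforgeability}. The challenger keeps the real key image $I = x_\pi H$. It samples $c_0, s_0,\dots,s_{n-1}$ uniformly, computes every $L_i, R_i$ from the verification equations, and programs $\mathcal{H}$ at the one closing index so that the ring closes. Programming fails only if $L_i$ collides with an earlier query; since $s_i$ is uniform, $L_i$ is uniform in $\mathbb{G}$, so this happens with probability at most $(q_H+n)/q$. By Lemma~\ref{lem:challenge}(iii), and because the $s_i$ are uniform in both games, Games 0 and 1 are statistically close. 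In Game 1 the tuple $(c_0,s_0,\dots,s_{n-1})$ no longer depends on $\pi$. The only remaining $\pi$-dependent quantity is $I$.

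\textbf{Game 2} replaces $I = x_\pi H$ with $I = zH$ for a fresh uniform $z$; equivalently, $I$ is a uniform point of $\mathbb{G}$. Once this is done, $\mathcal{A}$'s view is independent of $\pi$, which gives the bound $1/(n-t)$.

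The main obstacle is bounding the gap between Games 1 and 2 by DDH. Using Assumption~\ref{asm:generators}, I will program the hash-to-curve oracle so that $H$ is a challenge point. A reduction $\mathcal{B}$ receives a tuple $(G, aG, bG, Z)$ and guesses a non-corrupted index $j$. It sets $P_j = aG$ and $H = bG$, generates all other keys honestly (including the $t$ corrupted ones it hands over), and uses $I = Z$ whenever $\pi = j$. For $\pi \neq j$ it uses $x_\pi H$, which it can compute since it knows $x_\pi$.

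There are two subtleties. First, $\mathcal{A}$ chooses $R$, so $\mathcal{B}$ must embed $aG$ among the honestly generated keys it supplies beforehand. It then guesses which of them the adversary's ring will contain, losing a polynomial factor. Second, the hybrid should move one index at a time: a standard argument over the $n-t$ uncorrupted positions bounds $|\Pr_1[\hat\pi=\pi]-\Pr_2[\hat\pi=\pi]|$ by $(n-t)$ times the DDH advantage. Since $\mathcal{B}$ also knows $b$ implicitly only through $H$ and never needs $\log_G H$, the signature simulation from Game 1 carries over unchanged.

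Summing the hybrid losses then yields $\Pr[\hat\pi=\pi] \le 1/(n-t) + \mathrm{negl}(\lambda)$.
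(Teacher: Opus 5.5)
Your argument is correct, and its core is the same as the paper's. You program the hash-to-curve oracle so that $H=bG$, plant $aG$ as an honest key, use the DDH element $Z$ as the key image, and produce $\sigma$ by back-patching so that no secret key is ever needed. What differs is the organization.

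The paper uses a single reduction in which the planted key \emph{is} the signer ($\pi$ uniform, $P_\pi=\alpha G$, $I=Z$ always), and it folds the simulation step into the real case. There the win probability is at least $1/n+\epsilon$; in the random case it is exactly $1/n$. The DDH advantage is therefore $\epsilon$, with no loss.

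You instead separate the statistical hop (real versus back-patched signature, cost $(q_H+n)/q$) from the computational hop. You also switch the key image only when the signer lands on the planted index, which costs a factor $n-t$. That factor is avoidable. If you let the signer be the planted position (which, conditioned on your guess succeeding, is uniform over the uncorrupted positions), you recover the paper's tight bound. In fact, your reduction as literally written, with $j$ uniform and $Z$ used only when $\pi=j$, already has advantage $\frac{1}{n-t}\bigl(\Pr_1[\hat\pi=\pi]-\Pr_2[\hat\pi=\pi]\bigr)$ before the guessing loss. So no explicit index-by-index hybrid is needed.

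In exchange, your version is more faithful to Definition~\ref{def:ambiguity}. The paper's $\mathcal{M}$ builds the ring (and message) itself and treats $t>0$ in one sentence. You instead account for the adversary choosing $R$ and $D_t$ from pre-published honest keys, planting $aG$ by guessing and aborting if that key is corrupted. That is the right way to meet the definition as stated, at the cost of a polynomial loss.
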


\begin{proof}
We prove the case $t = 0$; the cases $0 < t < n-1$ follow by the same argument with the corrupted keys handed to the adversary. Suppose a PPT adversary $\mathcal{A}$ wins the experiment of Definition~\ref{def:ambiguity} with probability at least $\frac{1}{n} + \epsilon$ for non-negligible $\epsilon$. We construct a PPT distinguisher $\mathcal{M}$ for the DDH problem.

$\mathcal{M}$ is given a DDH challenge tuple $(G, \alpha G, \beta G, Z) \in \mathbb{G}^4$, where $Z$ is either $\alpha \beta G$ or a uniformly random point $\gamma G$, and must decide which. $\mathcal{M}$ programs the hash-to-curve oracle of Assumption~\ref{asm:generators} so that the secondary generator becomes $H = \beta G$. This step is where the reduction crucially relies on $H$ being a programmable random-oracle output rather than an arbitrary fixed constant; since $\beta G$ is uniformly distributed in $\mathbb{G}$, the programmed oracle is perfectly indistinguishable from an unprogrammed one.

$\mathcal{M}$ selects a target index $\pi \xleftarrow{\$} \{0, \dots, n-1\}$, sets $P_\pi = \alpha G$, generates all other ring members honestly as $P_i = x_i G$ with known $x_i$, and sets the key image $I = Z$. $\mathcal{M}$ then simulates a signature $\sigma$ for the ring $R$ and key image $I$ using the back-patching technique of Theorem~\ref{thm:unforgeability}, which requires no secret key, and runs $\mathcal{A}$ on $(m, R, \sigma, I)$.

$\mathcal{A}$ outputs a guess $\hat{\pi}$, and we distinguish the two cases of the challenge:
\begin{itemize}
    \item If $Z = \alpha \beta G$, then $I = \alpha(\beta G) = \alpha H$ is the well-formed key image for $P_\pi = \alpha G$, and the simulated transcript is distributed identically to a real signature by the member at index $\pi$: all responses $s_i$ are uniform, and by Lemma~\ref{lem:challenge}(iii) the challenges are random-oracle outputs whose distribution is independent of the signer index. By assumption, $\Pr[\hat{\pi} = \pi] \geq \frac{1}{n} + \epsilon$.
    \item If $Z = \gamma G$ for uniform $\gamma$, then $I$ is a uniformly random point independent of every ring member. Every component of $\mathcal{A}$'s view (the ring, the uniform responses, the oracle-derived challenges, and the independent key image) is distributed independently of $\pi$, so $\Pr[\hat{\pi} = \pi] = \frac{1}{n}$ exactly.
\end{itemize}

$\mathcal{M}$ outputs $1$ if and only if $\hat{\pi} = \pi$. Its distinguishing advantage is therefore
\[
\Pr[\mathcal{M} = 1 \mid Z = \alpha\beta G] - \Pr[\mathcal{M} = 1 \mid Z = \gamma G] \geq \left(\tfrac{1}{n} + \epsilon\right) - \tfrac{1}{n} = \epsilon,
\]
contradicting the hardness of DDH in $\mathbb{G}$. Hence $\epsilon$ must be negligible. \qed
\end{proof}

\begin{remark}[Culpability and Selective Disclosure]
When $t = n-1$, or whenever the true signer's secret key is disclosed, any party can test $x_i H \stackrel{?}{=} I$ and identify the signer with certainty; this is why Definition~\ref{def:ambiguity} requires $t < n-1$. Such \emph{culpability} is inherent to linkable ring signatures and is precisely the mechanism enabling the user-controlled auditability of Section~\ref{sec:sec}: by revealing $x$, a user proves both the deposit commitment $P = xG$ and the withdrawal key image $I = xH$ to an auditor, without affecting the ambiguity of any other participant's transactions.
\end{remark}

\subsection{Proof of Linkability}

\begin{theorem}
\label{thm:linkability}
The Obscura LSAG signature scheme is linkable in the random oracle model, provided the ECDLP is hard in $\mathbb{G}$.
\end{theorem}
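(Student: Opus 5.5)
The plan is to reduce linkability to the extraction argument already developed in the proof of Theorem~\ref{thm:unforgeability}. The key observation there is that forking at the gap index yields a scalar $x_\pi$ with both $P_\pi = x_\pi G$ and $I = x_\pi H$. The definition of linkability (Definition~\ref{def:linkability}) has two requirements, and I will treat them as two cases.

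The first case is an adversary that outputs a valid signature whose key image $I$ differs from $xH$ for the key $x$ it used. The second is an adversary holding at most one secret key of the ring that outputs two valid signatures with $I_1 \neq I_2$. Both are handled by showing that any valid signature, except with negligible probability, has $I = x_\pi H$ where $P_\pi = x_\pi G$ for some ring member $\pi$.

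\textbf{Step 1: every accepting signature binds its key image to a ring member.} Run the reduction $\mathcal{M}$ from Theorem~\ref{thm:unforgeability} unchanged. It simulates $\mathcal{H}$ by lazy sampling, answers signing queries by back-patching (with the consistent key-image table), and identifies the gap index $\pi$ as the earliest recorded query in the verification chain. Lemma~\ref{lem:challenge}(i) ensures all $n$ verification queries were made, except with probability about $2^{-255}$ per query. Forking at the gap query gives two transcripts with the same $L_\pi, R_\pi$ and distinct challenges. Lemma~\ref{lem:challenge}(ii) makes $c'_\pi - c_\pi$ invertible, so
\[
x_\pi = (s_\pi - s'_\pi)(c'_\pi - c_\pi)^{-1} \bmod q
\]
satisfies both $P_\pi = x_\pi G$ and $I = x_\pi H$. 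Since $P_\pi$ fixes $x_\pi$ uniquely in the prime-order group, the output key image must equal the canonical image $\log_G(P_\pi)\,H$ of some ring member. This already settles the requirement that the key image equal $xH$ for the secret key used, with $x := x_\pi$.

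\textbf{Step 2: two distinct images force knowledge of a second key.} Suppose the adversary outputs two valid signatures with $I_1 \neq I_2$. By Step 1 each is tied to a ring member whose discrete log is extractable: $I_j = \log_G(P_{\pi_j})\,H$. Distinct images force $\pi_1 \neq \pi_2$, so the two signatures open two different ring members. The adversary knows at most one of these keys, say $P_{\pi_1} = x_1 G$. To turn this into an ECDLP solution, embed the challenge $P^*$ at a uniformly random honest (uncorrupted) position among the $N$ published keys. With probability at least $1/N$, the position extracted from the signature other than the known key is $P^*$, and forking that signature outputs $\log_G P^*$.

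The honest keys are identically distributed and the embedding is hidden, exactly as in Theorem~\ref{thm:unforgeability}. The simulated signing oracle keeps key images consistent per public key via the table $r_i$. This does not help the adversary, because programmed $R_\pi$ values are simulated and the adversary's fresh forgeries are the ones being forked.

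\textbf{Main obstacle.} The delicate step is forking one of two output signatures. The Forking Lemma must be applied to a chosen one of the two transcripts, and its gap query must be rewound while the other signature, produced in the same run, may change. I would first handle this by guessing uniformly which signature to fork (a factor-$2$ loss) and applying the general Forking Lemma only to that signature's gap query. A second subtlety is that the simulated key images $r_\pi H$ do not equal $x_\pi H$ for embedded keys. I would restrict extraction to adversarially produced signatures not obtained from $\mathcal{SO}$, mirroring the freshness condition of Definition~\ref{def:unforgeability}. The resulting success probability is a polynomial degradation of the bound in Corollary~\ref{cor:concrete}, which contradicts ECDLP hardness.
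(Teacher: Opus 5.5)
Your proposal is correct and follows the paper's own two-step route. Step~1 reuses the forking extraction of Theorem~\ref{thm:unforgeability} to show that any accepted signature has $I = x_\pi H$ and $P_\pi = x_\pi G$ for a single witness $x_\pi$; Step~2 observes that $I_1 \neq I_2$ forces two distinct extracted witnesses, hence knowledge of a second ring member's discrete logarithm, contradicting the ECDLP, and your extra bookkeeping (embedding $P^*$ at a random honest position, guessing which of the two output signatures to fork at a factor-$2$ loss, and excluding signing-oracle outputs from extraction) only makes explicit what the paper's brief Step~2 leaves implicit.
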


\begin{proof}
The proof proceeds in two steps.

\emph{Step 1: every valid signature carries a well-formed key image.} The rewinding argument of Theorem~\ref{thm:unforgeability} applies to any algorithm producing a valid signature. Forking at the gap index $\pi$ yields two accepting transcripts sharing the points $L_\pi, R_\pi$, and hence the pair of equations
\begin{align*}
(s_\pi - s'_\pi)\, G &= (c'_\pi - c_\pi)\, P_\pi, &
(s_\pi - s'_\pi)\, H &= (c'_\pi - c_\pi)\, I,
\end{align*}
where $(c'_\pi - c_\pi)$ is invertible modulo $q$ by Lemma~\ref{lem:challenge}(ii). Writing $x_\pi = (s_\pi - s'_\pi)(c'_\pi - c_\pi)^{-1} \bmod q$, the first equation gives $P_\pi = x_\pi G$ and the second gives $I = x_\pi H$ \emph{with the same witness} $x_\pi$. Hence, except with negligible probability, any signature accepted by \textsf{Verify} has key image $I = x_\pi H$, where $x_\pi$ is the unique discrete logarithm of some ring member $P_\pi$.

\emph{Step 2: identical keys force identical key images.} Since $\mathbb{G}$ has prime order, the discrete logarithm of each ring member with respect to $G$ is unique in $\mathbb{Z}_q$. Suppose an adversary in possession of a single secret key $x$ outputs two valid signatures $\sigma_1, \sigma_2$ for rings containing $P = xG$. By Step 1, their key images satisfy $I_1 = x_{\pi_1} H$ and $I_2 = x_{\pi_2} H$ for extracted witnesses $x_{\pi_1}, x_{\pi_2}$. If both witnesses equal $x$, then $I_1 = xH = I_2$, and \textsf{Link} outputs $1$. Producing $I_1 \neq I_2$ therefore requires the two extracted witnesses to differ, i.e., the adversary must know the discrete logarithm of a \emph{second} ring member, which contradicts the ECDLP assumption for an adversary holding only one secret key. \qed
\end{proof}

\begin{corollary}[Double-Spend Resistance]
\label{cor:doublespend}
In the Obscura protocol, each deposit can be withdrawn at most once. By Theorem~\ref{thm:linkability}, every valid withdrawal spending the commitment $P = xG$ must present the key image $I = xH$. The smart contract asserts the non-existence of the nullifier box $\mathcal{B}_N$ before verification and records $I$ in $\mathcal{B}_N$ atomically with the payout; any second withdrawal attempt for the same deposit necessarily presents the identical $I$ and deterministically reverts, regardless of the decoy set chosen for the second ring.
\end{corollary}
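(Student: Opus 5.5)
The plan is to deduce the corollary from Theorem~\ref{thm:linkability} together with the atomic execution semantics of the contract described in Section~\ref{sec:onchain}. There are two cases: an honest second spend by the owner of $x$, and an adversarial attempt to withdraw the deposit $P = xG$ a second time with a different key image.

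First, fix a deposit $P = xG$ whose first withdrawal has been accepted. By Step~1 of the proof of Theorem~\ref{thm:linkability}, every accepted signature has key image $I = x_\pi H$, where $x_\pi$ is the discrete logarithm of some ring member; except with negligible probability this is the witness the signer actually holds. The first withdrawal therefore left $I = xH$ recorded in $\mathcal{B}_N$ (key \texttt{"n"}$\|I[0{:}32]$). This follows from the \emph{Closure and Settlement} step together with Execution-Layer Atomicity: $\mathcal{B}_N$ is written in the same atomic group as the payout, so a disbursed payout implies a recorded nullifier. Any later accepted withdrawal that ``spends $P$'' likewise presents $I' = xH = I$. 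The contract's Double-Spend Assertion checks that $\mathcal{B}_N$ has length zero before any verification, so the second attempt aborts and the whole group reverts. The decoy set of the second ring plays no role, because $I$ depends only on $x$ and $H$.

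Second, I must rule out a second withdrawal whose key image differs from $xH$ while still being ``attributable'' to the same deposit. Here I would invoke Step~2 of Theorem~\ref{thm:linkability}: producing a valid signature with $I' \neq xH$ requires the extracted witness to be the discrete logarithm of a different ring member. Such a withdrawal then spends that other member's deposit, not $P$. Either the adversary knows that member's secret, in which case it is a legitimate single spend of a different deposit, or it violates the ECDLP. So, per deposit, the map deposit $\mapsto$ nullifier is injective and deterministic. I would also note a collision subtlety: the box key uses only $I[0{:}32]$, which is the $x$-coordinate, and that coordinate determines the point up to sign. A distinct $I' = -xH$ would share the key and also be rejected, so truncation only strengthens the rejection.

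The main obstacle is the counting meaning of ``at most once.'' Step~2 shows that one adversary with one key cannot produce two distinct images, but it does not immediately bound the number of withdrawals against the number of deposits for a coalition holding $k$ keys. I would handle this by applying the extraction of Theorem~\ref{thm:linkability} to each of the coalition's accepted withdrawals. Each extracted witness is the secret of some ring member. Distinct recorded nullifiers correspond to distinct witnesses, hence to distinct deposits, since $x \mapsto xH$ is injective in a prime-order group. So at most $k$ payouts can occur, each consuming a distinct deposit, except with negligible probability; the forking-lemma loss from Corollary~\ref{cor:concrete} enters as a polynomial factor per withdrawal.
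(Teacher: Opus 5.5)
Your argument is correct and follows the paper's: Theorem~\ref{thm:linkability} forces the key image to be $xH$, the contract checks $\mathcal{B}_N$ before verification and records it atomically with the payout, so any repeat spend reverts whatever decoys are chosen. Your additions (the multi-key counting and the truncation remark) go beyond the paper and are sound, and the truncation point matters in practice: the commitment check inspects only $P_i[0{:}32]$, so the owner of $P$ can put $-P$ in the ring and sign with witness $-x$, giving the distinct image $-xH$, and only because $xH$ and $-xH$ share their first 32 bytes, hence the same nullifier box, does this attempt also revert --- a case the paper's ``identical $I$'' claim glosses over.
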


\subsection{Protocol-Level Security}

We now lift the signature-level guarantees to the on-chain protocol, considering malicious users, malicious relayers and block proposers, and passive blockchain observers.

\begin{lemma}[Non-Frameability]
\label{lem:nonframe}
No PPT adversary can produce a valid signature whose key image equals $I = xH$ for an honest user's secret key $x$, except with negligible probability.
\end{lemma}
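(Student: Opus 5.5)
The reduction goes to the ECDLP, reusing the simulation and extraction machinery of Theorem~\ref{thm:unforgeability} and Step~1 of Theorem~\ref{thm:linkability}. Suppose a PPT adversary $\mathcal{A}$, given honest public keys (among them $P^\star = x G$ for the targeted honest user) and access to $\mathcal{H}$ and $\mathcal{SO}$, outputs a valid signature $(m^*, R^*, I^*, \sigma^*)$ with $I^* = xH$, not obtained from a signing-oracle query on $(m^*,R^*)$. Then $\mathcal{A}$ succeeds with non-negligible probability $\epsilon$.

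The reduction $\mathcal{M}$ receives an ECDLP instance $P^*$ and embeds it at a uniformly random position among the $N$ honest keys, exactly as in Theorem~\ref{thm:unforgeability}. It answers $\mathcal{H}$ by lazy sampling. It answers $\mathcal{SO}$ by back-patching, using the per-key table of scalars $r_i$ and key images $r_i H$.

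One subtlety needs attention here. For the embedded key, $\mathcal{M}$ does not know $\log_G P^*$, so a simulated key image $r H$ is not literally $x H$. To keep the simulated view consistent with the event ``$I^* = $ honest key image'', I would instead program the hash-to-curve oracle via Assumption~\ref{asm:generators}, setting $H = hG$ for a known $h$. Then $\mathcal{M}$ can compute the correct key image $hP^* = x^* H$ for the embedded key without knowing $x^*$. The signatures are still simulated by back-patching, and the distribution is identical to the real game.

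Next I apply the Forking Lemma at the gap index of $\mathcal{A}$'s output, as in Step~1 of Theorem~\ref{thm:linkability}. This extracts a witness $x_\pi$ with $P_\pi = x_\pi G$ and $I^* = x_\pi H$, with invertibility of $c'_\pi - c_\pi$ supplied by Lemma~\ref{lem:challenge}(ii). Since $I^* = xH$ for the honest user's $x$, and $H$ is a generator of a prime-order group, $x_\pi = x$ and hence $P_\pi = P^\star$. In other words, the extracted witness is the honest user's secret, whether or not the honest user's key actually appeared in $R^*$ at the position the adversary intended.

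Alternatively, even more directly, once $H = hG$ is programmed, $I^* = xH$ gives $x = h^{-1}\log_G I^*$. So any extracted opening of $I^*$, or the witness $x_\pi$, solves the instance whenever the targeted user is the embedded one. That event has probability $1/N$, independent of $\mathcal{A}$'s view. The success bound then follows as in Corollary~\ref{cor:concrete}.

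The main obstacle is justifying that the extraction is meaningful for forgeries that reuse material from $\mathcal{SO}$ outputs. $\mathcal{A}$ has seen signatures carrying $I = xH$ from the oracle, and the forgery must not be a replay. I would handle this by noting that the forgery's $(m^*,R^*)$ is fresh. Consequently, the oracle query at the gap index is either $\mathcal{A}$'s own query, in which case forking works, or a back-patched one. A back-patched query cannot be one, because programmed inputs include $m$ and the points $L_i, R_i$, which are bound to the oracle query's $(m,R)$; a match would require a collision of the programmed input with a fresh transcript, which happens only with negligible probability. Finally, I would bound the failure events by the programming-collision term $(q_H+nq_S)/q$ and the guessing term $2^{-255}$.
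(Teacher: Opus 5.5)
Your core reduction is the paper's. Its proof invokes the extraction of Step~1 of Theorem~\ref{thm:linkability}, notes that $I = xH$ forces $x_\pi = x$ by uniqueness of discrete logarithms, and concludes via the ECDLP, without ever discussing signing-oracle outputs. Your programming of $H = hG$ with known $h$ is a real improvement on that skeleton. The simulation of Theorem~\ref{thm:unforgeability} hands out key images $r_iH$ rather than $x_iH$, which is only DDH-indistinguishable from the real game and changes what ``the honest user's key image'' means in the simulated world. By contrast, $hP_i = x_iH$ is exact for every key. Your ``even more directly'' remark adds nothing: $\log_G I^*$ is as hard to obtain as $x$, so forking remains the only source of the witness.

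The genuine gap is your final paragraph. The challenge hash is $\mathcal{H}(m \| L_i \| R_i)$ and the ring is not hashed, so an $\mathcal{SO}$-generated query is tied to $m$ and (through $L_i$) to one member, not to $(m,R)$. Suppose $(c_0, s_0, \dots, s_{n-1})$ is an oracle signature on $m$ for $(P_0, \dots, P_{n-1})$ with key image $xH$, and $c_1$ is its publicly computable first intermediate challenge. Then $(c_1, s_1, \dots, s_{n-1}, s_0)$ verifies on $m$ for the rotated ring $(P_1, \dots, P_{n-1}, P_0)$. Likewise $(c_0, s_0, \dots, s_{n-1}, s_0, \dots, s_{n-1})$ verifies for the ring listed twice, which the contract does not forbid. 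With rings as ordered lists, $(m,R^*)$ is fresh and the key image is $xH$, so $\mathcal{A}$ frames with probability $1$. Yet every query in the chain was made inside $\mathcal{SO}$ and no collision occurs. There is also nothing to fork on: rewinding into the $\mathcal{SO}$ call re-randomizes all $L_i, R_i$, so the two transcripts share no points. Under your freshness condition the statement is false, and the claim that a match needs a collision is wrong. (The paper's proof skips this point, and its unqualified statement is falsified already by replaying an honest withdrawal.)

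To repair this, first take freshness modulo such cycling (e.g.\ compare rings as sets), or hash the ring into each challenge as in the original LSAG, which makes your binding claim true. Then prove that a fresh forgery uses no $\mathcal{SO}$-generated query, by comparing $R$-components. Suppose at some position the forgery's incoming challenge $c$ and response $s$ reproduce an oracle query $m \| L_j \| R_j$ from a call with key image $I_j$. Then $sH + cI^* = s_jH + c_jI_j$. If $c \not\equiv c_j \pmod q$ or $I_j \neq I^*$, this is a nontrivial linear relation. From it $\mathcal{M}$, which knows $s_j$, $c_j$ and all non-embedded honest scalars, solves for $x$ directly without rewinding. For instance, $x = (s - s_j)(c_j - c)^{-1}$ when $I_j = I^* = xH$. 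Otherwise Lemma~\ref{lem:challenge}(ii) gives $c = c_j$ except with negligible probability, hence $s \equiv s_j$ and the member equals $P_j$. Since $c_j$ is the output of the preceding oracle query, collision-freeness of $\mathcal{H}$ forces the preceding position to reuse that query too. By induction, the forgery merely cycles through one oracle signature. Only after excluding this case do all queries belong to $\mathcal{A}$, so your forking step and the identification $x_\pi = x$ apply.
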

\begin{proof}
By Step 1 of Theorem~\ref{thm:linkability}, any valid signature with key image $I$ yields, under rewinding, a witness $x_\pi$ satisfying both $I = x_\pi H$ and $P_\pi = x_\pi G$ for some ring member $P_\pi$. Since discrete logarithms in a prime-order group are unique, $I = xH$ forces $x_\pi = x$, so a framing adversary can be used to extract the honest user's secret key, contradicting the ECDLP assumption. (Note also that computing $I$ directly from the public $P$ without $x$ is an instance of the Computational Diffie--Hellman problem, given Assumption~\ref{asm:generators}.) In particular, an adversary can neither preemptively ``spend'' an honest user's deposit nor cause an honest withdrawal to be falsely rejected as a double-spend. \qed
\end{proof}

\begin{lemma}[Replay Resistance and Destination Binding]
\label{lem:replay}
A valid withdrawal transaction cannot be replayed, and its recipient address cannot be altered, except with negligible probability.
\end{lemma}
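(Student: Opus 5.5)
The argument splits into two claims: replay is impossible, and the recipient cannot be altered.

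\emph{Replay.} Suppose a valid withdrawal $(R, m, I, \sigma)$ has been accepted once. By the pipeline of Section~\ref{sec:onchain} and atomicity, acceptance allocates the nullifier box $\mathcal{B}_N$ keyed by \texttt{"n"}$\,\|\,I[0\!:\!32]$. Any resubmission of the same transaction, or of any transaction carrying the same $I$, fails the double-spend assertion $\texttt{box\_length} = 0$ deterministically. If the resubmission happens before the first is confirmed, at most one of the two can commit, because state transitions are serialized and atomic. This gives deterministic replay resistance without any cryptography. The one caveat is that the box key uses only the first 32 bytes of $I$, i.e.\ its $x$-coordinate. Two distinct key images share an $x$-coordinate only if $I' = -I$. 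I will argue that an adversary producing a valid signature with key image $-xH$ for an honest user's $x$ would, by the extraction in Step~1 of Theorem~\ref{thm:linkability}, yield $-x$ as the discrete log of some ring member. That is an ECDLP break for honest keys, or else it concerns the adversary's own keys. So this case only affects the adversary's own deposits and does not weaken the guarantee for honest users.

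\emph{Destination binding.} Fix an honestly generated valid withdrawal $(R, m, I, \sigma)$. Suppose an adversary, for example a relayer or block proposer, outputs an accepting tuple $(R', m', I', \sigma')$ with $m' \neq m$ that pays to $m'$ and consumes the same deposit. I will reduce this to the unforgeability of Theorem~\ref{thm:unforgeability} via Definition~\ref{def:unforgeability}: the honest signature plays the role of a signing-oracle response on $(m, R)$, and $(m', R')$ is a fresh pair, hence a forgery. The only subtlety is the consumed key image. If $I' = I$, then Lemma~\ref{lem:nonframe} already shows the adversary cannot produce a valid signature with the honest user's image $xH$. If $I' \neq I$, the adversary is not redirecting this particular withdrawal but spending some other deposit, which requires its own secret key.

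As a more direct sanity argument, I will also note that $m$ is a prefix of every hash input $m \| L_i \| R_i$. Changing $m$ therefore turns every oracle query fresh, and by Lemma~\ref{lem:challenge}(i) reusing $\sigma$ closes the ring only with probability about $2^{-255}$ per attempt, hence at most $q_H 2^{-255}$ overall.

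The step I expect to be hardest is the reduction in the $I' = I$ case. The adversary sees an honest signature under $I$, so Lemma~\ref{lem:nonframe} has to be applied in the presence of that transcript. I plan to handle this by simulating the honest signature with the back-patching simulator from the proof of Theorem~\ref{thm:unforgeability} (using $r_\pi$-style consistent key images, programming $H$ per Assumption~\ref{asm:generators}), then forking on the adversary's output. The query set for the adversary's message $m'$ is disjoint from the programmed points, which all begin with $m$. Forking is therefore legitimate and extracts $x$, contradicting the ECDLP.
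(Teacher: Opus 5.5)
Your proposal is correct, and it contains the paper's proof: your ``sanity argument'' is the paper's argument. The main destination-binding argument, however, takes a more careful route.

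\textbf{What the paper does.} It handles replay exactly as you do: the first execution records $I$ in $\mathcal{B}_N$, so Corollary~\ref{cor:doublespend} rejects any resubmission. For destination binding it observes that $m$ prefixes every oracle input, so by Lemma~\ref{lem:challenge}(i) a mutated chain closes with probability at most $2^{-255}$ per attempt. It then declares any other redirection a fresh forgery contradicting Theorem~\ref{thm:unforgeability}.

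\textbf{Where you differ.} You rightly do not accept that last step at face value. Theorem~\ref{thm:unforgeability} is proved only for rings without corrupted keys, whereas a relayer who owns a deposit can legitimately produce a valid withdrawal to $m'$. What actually has to be excluded is an accepting tuple whose key image lands in the honest user's nullifier box. That means $I' = I$, or, because boxes are keyed by the $x$-coordinate, $I' = -I$. Your case split isolates this and reduces it to Lemma~\ref{lem:nonframe} with the intercepted signature in the adversary's view. The paper's version is shorter; yours is what actually justifies the lemma against adversaries who hold their own keys.

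\textbf{Points to tighten.}
First, the $x$-coordinate truncation cannot hurt replay resistance, since it only merges boxes and so causes extra rejections. Your $-xH$ extraction argument belongs in the framing case $I' = -I$ of destination binding, not under replay.
Second, in your simulation the honest key image must be $I = hP$ for the programmed $H = hG$. It must not be an independent $r_\pi H$ as in the simulator of Theorem~\ref{thm:unforgeability}. Only with $I = hP$ is the simulated view exact and does the extracted witness satisfy $x_\pi = \pm x$, yielding an ECDLP solution. With a random $r_\pi$ the extractor would simply return $r_\pi$, which the reduction already knows.
Third, your disjointness claim for the fork relies on $m$ being a fixed 32-byte string. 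That is what guarantees $m' \| L \| R$ can never equal a programmed input $m \| L'' \| R''$.
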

\begin{proof}
The recipient address $m$ is bound into every challenge of the Fiat--Shamir chain, $c_{i+1} = \mathcal{H}(m \| L_i \| R_i)$. Substituting a different recipient $m' \neq m$ changes every random-oracle input; by Lemma~\ref{lem:challenge}(i), the probability that the mutated chain still closes (i.e., the recomputed final challenge equals $c_0$) is at most $2^{-255}$ per attempt. An active network adversary (e.g., a malicious relayer or block proposer) who intercepts a pending withdrawal therefore cannot redirect the payout without producing a fresh forgery, contradicting Theorem~\ref{thm:unforgeability}. Replaying the unmodified transaction is prevented unconditionally by Corollary~\ref{cor:doublespend}, as the key image $I$ is recorded in $\mathcal{B}_N$ upon first execution. \qed
\end{proof}

\paragraph{Blockchain Observers and Transaction Integrity.}
A globally passive adversary observes the ring $R$, the key image $I$, the recipient $m$, and the signature $\sigma$ of every withdrawal. By Theorem~\ref{thm:anonymity}, it identifies the true signer with probability at most $1/n$ plus a negligible term; the same reduction, together with Assumption~\ref{asm:generators}, shows that it cannot link any deposit commitment $P$ to its nullifier $I$. The AVM's atomic group execution further guarantees transaction integrity: the nullifier is recorded if and only if the signature verifies, and funds are disbursed if and only if the nullifier is recorded, leaving no reachable partial state for any interleaving of adversarial transactions.

\paragraph{Trust Assumptions and Limitations.}
The guarantees above are conditional on the trust model of Section~\ref{sec:sec}: the \textsf{Sign} algorithm must execute in a trusted local environment, since any party observing the secret scalar $x$ trivially breaks both anonymity and non-frameability. The anonymity bound $1/n$ is a cryptographic ceiling; operational behavior (biased decoy selection, temporal correlation between deposits and withdrawals, small effective pools) can only degrade it, as discussed in Section~\ref{sec:sec}. Finally, all reductions in this appendix are classical; quantum adversaries, which invalidate the ECDLP and XDH assumptions, are outside the model (cf.\ the post-quantum discussion in Section~\ref{sec:sec}).

\paragraph{Complexity of Cryptographic Operations.}
The on-chain verification requires exactly $n$ iterations. Each iteration performs four scalar multiplications and two point additions on $\mathbb{G}$ (two of each for $L_i$ and for $R_i$), plus one hash evaluation, matching the per-member cost accounting of Table~\ref{tab:opcode_costs}. The computational complexity is therefore $O(n)$. The AVM's deterministic opcode pricing ensures that this complexity translates to a fixed execution cost of $7608n + O(1)$ opcodes. The dynamic opcode pooling mechanism via inner transactions scales linearly to meet this requirement, guaranteeing that the protocol remains executable up to the AVM's maximum argument size limit ($n=19$), without introducing unbounded loops or resource exhaustion vectors.

\bibliographystyle{splncs04}
\bibliography{ref}

\end{document}